\documentclass[10pt,journal]{IEEEtran}
\usepackage{amsfonts}
\usepackage{amssymb,epic,eepic}
\usepackage{amsmath}
\sloppy
\frenchspacing


\newcommand{\hr}{{\cal H}}

\newcommand{\cc}{{\mathbb C}}
\newcommand{\rr}{{\mathbb R}}
\newcommand{\nn}{{\mathbb N}}
\newcommand{\idn}{\mathbf{1}}

\newcommand{\eps}{{\varepsilon}}        

\newtheorem{theorem}{Theorem}[section]         
\newtheorem{lemma}[theorem]{Lemma}             
\newtheorem{remark}[theorem]{Remark}           

\begin{document}

\title{Classical Capacities of Compound and Averaged Quantum Channels}

\author{Igor Bjelakovi\'c and Holger Boche\\[2mm]
\small Heinrich-Hertz-Chair for Mobile Communications\\ Technische Universit\"at Berlin\\Werner-von-Siemens-Bau (HFT 6), Einsteinufer 25, 10587 Berlin, Germany\\
\&\\
Institut f\"ur Mathematik, Technische Universit\"at Berlin\\
Stra\ss e des 17. Juni 136, 10623 Berlin, Germany\\
Email: \{igor.bjelakovic, holger.boche\}@mk.tu-berlin.de 
\thanks{This work is supported by the Deutsche Forschungsgemeinschaft DFG via project Bj 57/1-1 ''Entropie und Kodierung gro\ss er Quanten-Informationssysteme''.}}
\maketitle
\begin{abstract}We determine the capacity of compound classical-quantum channels. As a consequence we obtain the capacity formula for the averaged classical-quantum channels. The capacity result for compound channels demonstrates, as in the classical setting, the existence of reliable universal classical-quantum codes in scenarios where the only a priori information about the channel used for the transmission of information is that it belongs to a given set of memoryless classical-quantum channels.
Our approach is based on a universal classical approximation of the quantum relative entropy which in turn relies on a universal hypothesis testing result.
\end{abstract}
\begin{keywords}
Compound quantum channels, averaged quantum channels, coding theorem, capacity, universal quantum codes
\end{keywords}

\section{Introduction}
 In this paper we present the coding theorems for compound and averaged channels with classical input and quantum output (cq-channels). The result nicely supplements recent results of Datta and Dorlas \cite{datta-dorlas} where they considered finite weighted sums of memoryless quantum channels and determined their classical capacity. This is one of the basic examples of channels with long-term memory. This is obviously equivalent to the determination of the classical capacity for the associated compound channel consisting of finitely many channels, since for finite sums we can easily bound the error probabilities of the individual memoryless branches by the error probability of the averaged channel and vice versa. Unfortunately, the beautiful method of proof in \cite{datta-dorlas} does not apply when the number of channels is infinite.\\
Roughly, the interest in compound channels is motivated by the fact that in many situations we have only a limited knowledge about the channel which is used for the transmission of information. In the compound setting we know merely that the memoryless cq-channel which is in use belongs to some given finite or infinite set of memoryless cq-channels which is a priori known to the sender and receiver. Their goal is to construct coding-decoding strategies that work well for the whole set of channels simultaneously. The situation is comparable with the universal source coding scenario considered in \cite{jozsa-horodecki} by Jozsa and M., P., and R. Horodecki. Averaged cq-channels are close relatives of compound channels, the difference being that in this situation the communicating parties have access to an additional a priori probability distribution governing the appearance of the particular member of the compound channel.\\
 The paper is organized as follows: In Section \ref{rev-class-comp} we give a rapid overview of the classical theory of compound channels. Whereas Section \ref{def-cq-comp} is devoted to the notion of compound cq-channels and the definition of the capacity for this class of channels. The subsequent Section \ref{universal-approximation} contains the first pillar of our argument. Namely, we construct, using an idea going back to Nagaoka, a universal classical approximation of the quantum relative entropy for classes of uncorrelated quantum states. The central Section \ref{compound} starts with a relation between a minimization procedure arising in universal hypothesis testing and the minimization process required for the determination of the capacity of compound cq-channels which is based on Donald's inequality (cf. Lemmata \ref{donald} and \ref{equality-2}). Then we proceed with the direct and the (strong) converse part of the coding theorem for compound cq-channels\footnote{After the submission of this paper Hayashi \cite{hayashi-compound} obtained a similar result via Weyl-Shur duality. His result can be used to give another proof  of the direct part of the coding theorem for averaged channels. His error bounds are exponenial but depend on the channel.}.
As a by-product we can prove in Section \ref{av-channel} the coding theorem and the weak converse for arbitrary averaged cq-channels with memoryless branches. This extends, in part, the results of Ahlswede \cite{ahlswede} to the cq-situation. Moreover, the results of Datta and Dorlas \cite{datta-dorlas} are generalized to averages of memoryless cq-channels with respect to arbitrary probability measures, provided the set of channels has some appropriate measurable structure.
\subsection{Notation}
We will assume tacitly throughout the paper that all Hilbert spaces are over the field $\cc$. The identity operator acting on a Hilbert space $\hr$ is denoted by $\idn_{\hr}$ or simply by $\idn$ if it is clear from the context which Hilbert space is under consideration. The set of density operators acting on the finite-dimensional Hilbert space $\hr$ is denoted by $\mathcal{S}(\hr)$ and the set of probability distributions on a finite set $A$  will be abbreviated by $\mathcal{P}(A)$. $|A|$ denotes the cardinality of the set $A$. The projection onto the range of a density operator $\rho\in \mathcal{S}(\hr)$, $\dim \hr <\infty$, is called the support of $\rho$ and we dedicate the notation $\mathrm{supp}(\rho)$ to it.\\
The relative entropy of the state (i.e. density operator) $\rho$ with respect to the state $\sigma$ is given by
\[  S(\rho||\sigma):=\left\{ \begin{array}{ll}  
\textrm{tr}(\rho\log \rho-\rho\log \sigma) & \textrm{if } \mathrm{supp}(\rho)\le \mathrm{supp}(\sigma)\\
\infty & \textrm{else}
\end{array}\right.,
 \] 
where $\textrm{tr}$ stands for the trace and $\log$ is the binary logarithm. The classical analog of the relative entropy known as Kullback-Leibler distance is defined by
\[  D(p||q):=\left\{ \begin{array}{ll}  
\sum_{a\in A}p(a)\log p(a)-p(a)\log q(a) & \textrm{if } p\ll q\\
\infty & \textrm{else}
\end{array}\right.,
 \] 
where $p,q\in\mathcal{P}(A)$. The relation $p\ll q$ means that $q(a)=0$ for some $a\in A$ implies $p(a)=0$ or, equivalently, that $\mathrm{supp}(p)\subset \mathrm{supp}(q)$, where $\mathrm{supp}(p):=\{a\in A:p(a)>0  \}$.\\
Von Neumann entropy of a density operator $\rho\in\mathcal{S}(\hr)$, $\dim \hr <\infty$, is defined to be $S(\rho):=-\textrm{tr}(\rho\log\rho)$. The Shannon entropy of $p\in \mathcal{P}(A)$, $|A|<\infty$, is given by $H(p):=-\sum_{x\in A}p(x)\log p(x)$.\\
The $n$-fold Cartesian product of a finite set $A$ with itself is denoted by $A^n$. We set $x^n:=(x_1,\ldots ,x_n)$ for sequences $(x_1,\ldots ,x_n)\in A^n$.\\
Notation we use for the logarithms is as follows: $\log_{a}$ is the logarithm to the base $a>1$ and $\log$ is understood as $\log_2$.
\section{Short Overview of the Classical Theory of Compound Channels}\label{rev-class-comp}
The basic classical theory of compound channels was developed independently by Blackwell, Breiman, Thomasian \cite{blackwell} and Wolfowitz \cite{wolfowitz-compound}. Blackwell, Breiman and Thomasian proved the coding theorem with the weak converse. Wolfowitz, on the other hand, obtained the coding theorem with the strong converse for the maximum error criterion by an entirely different method of proof. We recall at this place briefly the capacity formula just to emphasize the similarity to the capacity formula (\ref{quant-compound-capacity}) for the cq-case.\\
For an arbitrary set $T$ and finite sets $A$, $B$ we consider the family of discrete channels $W_t:A\to B$, $t\in T$. The compound channel, denoted by $T$, is simply the whole family of discrete memoryless channels $\{W_{t}^{n}  \}_{t\in T,n\in \nn}$.\\
Let $\lambda\in (0,1)$. An $(n,M_n,\lambda)_{\max}$-code for the compound channel $T$ is set of tuples $(x^n(i), B_i)_{i=1}^{M_n}$ where $x^{n}(i)\in A^n$, $B_i\subseteq B^n$, $B_i \cap B_j=\emptyset$ for $i\neq j$ and
\[ W_t^n(B_i|x^n(i))\ge 1-\lambda \]
for all $i=1,\ldots , M_n$ and all $t\in T$. A similar definition of the $(n, M_n,\lambda)_{\mathrm{av}}$-codes can be given simply by replacing the maximum error criterion by the average one. Thus the goal is to find reliable codes which work well for all discrete memoryless channels indexed by the set $T$.\\
The work \cite{blackwell}, \cite{wolfowitz-compound} can be summarized as follows: The weak capacity of the compound channel $T$ with respect to both the maximum and average error criteria is given by
\begin{equation}\label{class-compound-capacity}
  C(T)=\max_{p\in \mathcal{P}(A)}\inf_{t\in T}I(p,W_t),
\end{equation}
where $\mathcal{P}(A)$ denotes the set of probability distributions on $A$ and $I(p,W_t)$ is the mutual information of the channel $W_t$ with respect to the input distribution $p$. Wolfowitz has shown that the RHS of (\ref{class-compound-capacity}) is  the strong capacity with respect to the maximum error criterion. Ahlswede gives an example in \cite{ahlswede-compound} that demonstrates that, surprisingly, the strong converse need not hold for compound channels if the average probability of error is used in the definition of the capacity.
\section{Compound CQ- Channels}\label{def-cq-comp}
We consider here a set of cq-channels $W_t:A\ni x\mapsto D_{t,x}\in \mathcal{S}(\hr)$, $t\in T$, for an arbitrary set $T$ where $A$ is a finite set and $\hr$ is a finite-dimensional Hilbert space. The $n$-th memoryless extension of the cq-channel $W_t$ is given by $W_{t}^{n}(x^n):=D_{t,x^n}:=D_{t,x_1}\otimes \ldots \otimes D_{t,x_n}$ for $x^n \in A^n$.\\
The \emph{compound cq-channel} is given by the family $\{W_t^n\}_{t\in T,n\in \nn}$. We will write simply $T$ for the compound cq-channel. \\
An $n$-\emph{code}, $n\in \nn$, for the compound cq-channel $T$ is a family $\mathcal{C}_n:=(x^n(i),b_i)_{i=1}^{M_n}$ consisting of sequences $x^n(i)\in A^n$ and positive semi-definite operators $b_i\in \mathcal{B}(\hr)^{\otimes n}$ such that $\sum_{i=1}^{M_n}b_i\le \idn^{\otimes n} $. The number $M_n$ is called the \emph{size} of the code.\\
A code $\mathcal{C}_n$ is called a $(n,M_n,\lambda)_{\max}$-code for the compound cq-channel $T$ if the size of $\mathcal{C}_n$ is $M_n$, $x^n(i)\in A^n$ and if
\begin{equation}\label{error-def}
  e_m (t,\mathcal{C}_n):=\max_{i=1,\ldots ,M_n}(1-\textrm{tr}(D_{t,x^n(i)}b_i))\le \lambda \qquad \forall t\in T.
\end{equation}
with an analog definition of an $(n,M_n,\lambda)_{\textrm{av}}$-code w.r.t average error probability criterion, i.e. we replace $e_m (t,\mathcal{C}_n)\le \lambda $ by
\[  e_a (t,\mathcal{C}_n):=\frac{1}{M_n}\sum_{i=1}^{M_n}(1-\textrm{tr}(D_{t,x^n(i)}b_i))\le \lambda \qquad \forall t\in T \]
in the definition.\\
 Thus an $(n,M_n,\lambda)_{\max}$-code for the compound channel $T$ ensures that the maximal error probability for all channels of class $T$ is bounded from above by $\lambda$. A more intuitive description of the compound channel is that the sender and receiver actually don't know which channel from the set $T$ is used during the transmission of the $n$-block. Their prior knowledge is merely that the channel is memoryless and belongs to the set $T$.
This is a channel analog of the universal source coding problem for a set of memoryless sources (cf. \cite{jozsa-horodecki}).\\
A real number $R\ge 0$ is said to be an \emph{achievable rate} for the compound channel if there is a sequence of codes $(\mathcal{C}_n)_{n\in\nn}$ of sizes $M_n$ such that
\begin{equation}\label{achiev-rate-def}
 \liminf_{n\to\infty}\frac{1}{n}\log M_n\ge R, 
\end{equation}
and
\begin{equation}
 \lim_{n\to\infty}\sup_{t\in T}e(t, \mathcal{C}_n)=0. 
\end{equation}
The \emph{weak capacity}, denoted by $C(T)$, of the compound channel $T$ is defined as the least upper bound of all achievable rates.\\
$R\ge 0$ is called a \emph{$\lambda$-achievable rate} for the compound channel $T$, $\lambda\in[0,1)$, if there is a sequence of codes $(\mathcal{C}_n)_{n\in\nn}$ of sizes $M_n$ for which (\ref{achiev-rate-def}) holds but the error condition is relaxed to
\[  \sup_{t\in T}e(t, \mathcal{C}_n)\le \lambda \qquad \forall n\in\nn.\]
The \emph{$\lambda$-capacity} $C(T,\lambda)$ is the least upper bound of all $\lambda$-achievable rates. \\
The Holevo information of a cq-channel $W_t: A\to\mathcal{S}(\hr)$ with respect to the input distribution $p\in \mathcal{P}(A)$ is defined by
\begin{equation}\label{holevo-inf-def}
  \chi (p,W_t):=S(D_t)-\sum_{x\in A}p(x)S(D_{t,x})
\end{equation}
where $S(\cdot)$ stands for von Neumann entropy.\\
As shown in \cite{holevo}, \cite{schumacher}, \cite{winter}, and \cite{ogawa-nagaoka} the $\lambda$-capacity of a single memoryless cq-channel $W$ is given by
\[C(W,\lambda)=\max_{p\in \mathcal{P}(A)}\chi(p,W)\quad \forall \lambda\in (0,1).\]
The main result of our paper is an analog of the capacity formula (\ref{class-compound-capacity}) and can be stated as follows.
\begin{theorem}
Let $T$ be an arbitrary compound cq-channel with finite input alphabet $A$ and finite-dimensional output Hilbert space $\hr$. Then 
\begin{equation}\label{quant-compound-capacity}
  C(T,\lambda)=\max_{p\in \mathcal{P}(A)}\inf_{t\in T}\chi(p,W_t)
\end{equation}
holds for any $\lambda\in (0,1)$.
\end{theorem}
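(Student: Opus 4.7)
The statement splits naturally into the two inequalities $C(T,\lambda)\ge \max_p\inf_t\chi(p,W_t)$ (direct part) and $C(T,\lambda)\le \max_p\inf_t\chi(p,W_t)$ (strong converse), both to be established for every $\lambda\in(0,1)$ and uniformly in $t\in T$. Neither bound follows from the single-channel capacity formula alone: the trivial converse obtained by applying the single-letter formula to each $t$ separately only yields $C(T,\lambda)\le\inf_t\max_p\chi(p,W_t)$, which is in general strictly weaker, and the achievability requires producing one codebook and one decoder that succeed simultaneously for every $t$.

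\emph{Achievability.} Since $p\mapsto\inf_t\chi(p,W_t)$ is an infimum of continuous functions on the compact simplex $\mathcal{P}(A)$, it is upper semi-continuous and hence attains its supremum at some $p^*$. Fix any $R<R^*:=\inf_t\chi(p^*,W_t)$ and generate $M_n=\lceil 2^{nR}\rceil$ codewords i.i.d.\ from $(p^*)^{\otimes n}$. The decoder is built from the universal typical projectors produced by the Nagaoka-type universal approximation of the quantum relative entropy developed in Section~\ref{universal-approximation}; plugged into the Hayashi--Nagaoka operator inequality, these yield an ensemble-average error probability that decays to $0$ uniformly in $t\in T$ whenever $R<R^*$. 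A standard expurgation of the worst half of the codewords converts the average-error bound into a maximum-error one without asymptotic rate loss, giving the required achievability for every $\lambda\in(0,1)$.

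\emph{Strong converse.} For any $(n,M_n,\lambda)_{\max}$-code, let $\hat p_n$ be the uniform distribution on its codewords and $\bar p_n\in\mathcal{P}(A)$ the time-average of its single-letter marginals. For each fixed $t\in T$ the code is an $(n,M_n,\lambda)$-code for the memoryless cq-channel $W_t^n$, so the single-channel strong converse of Winter/Ogawa--Nagaoka, combined with subadditivity of $\chi$ on product channels and its concavity in the input distribution, produces an inequality
\[
  \tfrac{1}{n}\log M_n \le \chi(\bar p_n,W_t) + \delta_n(\lambda),
\]
with $\delta_n(\lambda)\to 0$ independent of $t$. Taking the infimum over $t$ and using $\inf_t\chi(\bar p_n,W_t)\le\max_p\inf_t\chi(p,W_t)$ closes the argument upon $n\to\infty$.

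\emph{Main obstacle.} The decisive step is the construction, in the achievability direction, of a single sequence of decoding projectors whose error decays uniformly over the entire, possibly uncountable, family $T$; this is precisely what the universal approximation of the relative entropy in Section~\ref{universal-approximation} is designed to deliver. Lemmata~\ref{donald} and \ref{equality-2} then serve to identify the rate expression produced by the universal-hypothesis-testing side of the argument with $\max_p\inf_t\chi(p,W_t)$, and this bridge between the testing picture and the Holevo-information picture is where the bulk of the new work lies.
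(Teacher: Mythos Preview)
Your achievability outline names the right high-level ingredients and correctly identifies the bridge (via Lemmata~\ref{donald} and~\ref{equality-2}) between the universal-hypothesis-testing quantity and the Holevo information. However, you gloss over a substantial piece of the actual machinery. The paper does not apply the universal approximation to the whole (possibly uncountable) family $T$ and then feed the result directly into Hayashi--Nagaoka. Instead it first discretizes $T$ into a polynomial-sized family $T_n$ (Lemma~\ref{T-n-vs-T}), perturbs the outputs to make the reference states invertible, and then builds a good projector for the \emph{single averaged channel} $\frac{1}{|T_n|}\sum_{t\in T_n}{W'}_t^{\,n}$ via the chain Theorem~\ref{abelian-approximation} $\to$ induced classical channels $V_t$ $\to$ Blackwell--Breiman--Thomasian (Theorem~\ref{finite-feinstein}) and Hoeffding (Theorem~\ref{hoeffding}). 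Only then is Theorem~\ref{one-shot-coding-theorem} invoked, because that result is a one-channel statement requiring one projector $P$ and one reference state $w\otimes\sigma$. Your sketch reads as if the Hayashi--Nagaoka inequality were applied uniformly across $t$; you should make explicit how a \emph{single} decoder is produced that works for all $t$ simultaneously, and the discretization/averaging detour is exactly how the paper achieves this.

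On the converse side there is a genuine gap. The ordinary single-channel strong converse yields only $\frac{1}{n}\log M_n\le \max_p\chi(p,W_t)+o(1)$, which after taking the infimum over $t$ gives $\inf_t\max_p\chi(p,W_t)$ --- the wrong order of quantifiers. Your claimed refinement $\frac{1}{n}\log M_n\le\chi(\bar p_n,W_t)+\delta_n(\lambda)$ does \emph{not} follow from ``strong converse plus subadditivity plus concavity'': subadditivity and concavity would give the desired inequality only if you already had $\log M_n\le\chi(\hat p_n,W_t^{\,n})+o(n)$ for the code's own empirical distribution $\hat p_n$, and no standard strong converse supplies such a distribution-dependent bound with vanishing additive term. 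The paper gets the correct $\max$--$\inf$ order by Wolfowitz' method: decompose the code into constant-composition sub-codes, apply Winter's \emph{type-specific} strong converse (Theorem~\ref{winter-theorem}) to each sub-code to obtain $M_{n,j}\le 2^{n(\chi(p_j,W_t)+K'/\sqrt{n})}$ with $p_j$ fixed, take the infimum over $t$ \emph{with $p_j$ held fixed}, and only afterward maximize over $p$. The type decomposition is not cosmetic; it is precisely what lets the infimum over $t$ slip inside the maximum over input distributions.
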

\begin{proof} The achievability, i.e. the inequality
\[ C(T,\lambda)\ge\max_{p\in \mathcal{P}(A)}\inf_{t\in T}\chi(p,W_t)  \]
follows from Theorem \ref{comp-direct-part}. On the other hand, Theorem \ref{comp-strong-converse} shows that we cannot be better than the right hand side of (\ref{quant-compound-capacity}) which establishes the inequality
\[C(T,\lambda)\le\max_{p\in \mathcal{P}(A)}\inf_{t\in T}\chi(p,W_t).  \]
\end{proof}
\section{Universal Classical Approximation of the Quantum Relative Entropy}\label{universal-approximation}
The purpose of this section is the derivation of a universal classical approximation of quantum relative entropies of a given set $\Omega\subset \mathcal{S}(\hr)$ with respect to a reference state $\sigma\in \mathcal{S}(\hr)$. The first result of this kind was obtained in the paper \cite{hiai-petz} by Hiai and Petz in the case $|\Omega|=1$. Basically they have shown that for given states $\rho,\sigma\in \mathcal{S}(\hr)$ we can approximate $S(\rho^{\otimes l}||\sigma^{\otimes l})$ by the Kullback-Leibler divergence of the probability distributions $p_l$ and $q_l$ given by
\[ p_l(i)=\textrm{tr}(\rho^{\otimes l}P_i), \quad q_l(i)=\textrm{tr}(\sigma^{\otimes l}P_i ), \]
for suitable projections $P_i=P_i(l,\rho,\sigma)\in \mathcal{B}(\hr)^{\otimes l}$ with $\sum_{i=1}^{N_l} P_i=\idn_{\hr}^{\otimes l}$. The approximation error does not exceed $\dim \hr \cdot \log (l+1)$. Precisely, Hiai and Petz have shown that
\[S(\rho^{\otimes l}|| \sigma^{\otimes l})\ge D(p_l||q_l)\ge S(\rho^{\otimes l}|| \sigma^{\otimes l})-\dim \hr \cdot \log (l+1).  \]
This approximation result for quantum relative entropy was the crucial step for a construction of projections $Q_n\in\mathcal{B}(\hr)^{\otimes n}$ for each $n\in\nn$ with the properties
\begin{enumerate}
\item $\lim_{n\to\infty}\textrm{tr}(\rho^{\otimes n}Q_n)=1$ and,
\item $\limsup_{n\to\infty}\frac{1}{n}\log \textrm{tr}(\sigma^{\otimes n}Q_n)\le S(\rho||\sigma)$.
\end{enumerate}
These properties are exactly the direct part of the quantum version of Stein's Lemma. Subsequently, Nagaoka observed that these arguments can be reversed, i.e. starting from the direct part of Stein's Lemma we can construct a classical approximation of the quantum relative entropy by simply considering the projections $Q_n$ and $\idn_{\hr}^{\otimes n}-Q_n$ and probability distributions $p_n=(\textrm{tr}(\rho^{\otimes n}Q_n), 1- \textrm{tr}(\rho^{\otimes n}Q_n) )$, $q_n=(\textrm{tr}(\sigma^{\otimes n}Q_n), 1- \textrm{tr}(\sigma^{\otimes n}Q_n)   )$\footnote{We learned this from the paper \cite{ogawa-hayashi} by Ogawa and Hayashi who attribute this observation to Nagaoka.} (cf. our inequality chain (\ref{nagaoka-like-calculation}) for more details). It is an interesting fact that Nagaoka's argument produces for each $n\in\nn$ pairs of projections which give rise to a good approximation of the quantum relative entropy. \\
Our approach to the universal classical approximation is motivated by Nagaoka's argument and therefore we need a universal version of Stein's Lemma or Sanov's Theorem from \cite{q-sanov}. Actually we need a slightly sharper result than that obtained in \cite{q-sanov}. The main tool to obtain this sharpening is contained in the following
\begin{lemma}\label{c-sharp-sanov}
Let $X$ be a finite set and $r\in \mathcal{P}(X)$ with $r(x)>0$ for all $x\in X$. Then for each $\delta>0$, $k\in \nn$, and any set $\Omega_k\subset \mathcal{P}(X)$ there is a subset $X_{k,\delta}\subset X^k$ with
\begin{enumerate}
\item $q^{\otimes k}(X_{k,\delta})\ge 1- (k+1)^{|X|}2^{-kc\delta^2}$ for all $q\in \Omega_k$ with a universal constant $c>0$.
\item 
\[r^{\otimes k}(X_{k,\delta})\le (k+1)^{|X|}2^{-k(D(\Omega_k||r)-\eta(\delta,r) )},  \]
with $D(\Omega_k||r):=\inf_{q\in \Omega_k}D(q||r)$
and $\eta (\delta,r):=-\delta\log \frac{\delta}{|X|}-\delta\log r_{\min} $, where $r_{\min}$ denotes the smallest positive value of $r$.
\end{enumerate}
\end{lemma}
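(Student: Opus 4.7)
The plan is to use the method of types and to take $X_{k,\delta}$ to be the union of those $k$-type classes whose empirical distributions lie within $L^1$-distance at most $\delta$ of $\Omega_k$. More concretely, let $\mathcal{P}_k(X)$ denote the set of empirical types of sequences in $X^k$ and, for a type $t$, let $T_t^k$ be the associated type class; I would set
\[ X_{k,\delta}:=\bigcup\bigl\{T_t^k : t\in\mathcal{P}_k(X),\ \inf_{q\in\Omega_k}\|t-q\|_1\le\delta\bigr\}. \]
This definition depends only on $\Omega_k$ and not on any particular $q\in\Omega_k$, which is what a universal test requires.

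For item 1, I would fix an arbitrary $q\in\Omega_k$ and note that if the empirical type $P_{x^k}$ satisfies $\|P_{x^k}-q\|_1\le\delta$, then $x^k\in X_{k,\delta}$ by construction, hence $X_{k,\delta}^{c}\subseteq\{x^k:\|P_{x^k}-q\|_1>\delta\}$. The standard Sanov-type upper bound $q^{\otimes k}(T_t^k)\le 2^{-kD(t\|q)}$ together with $|\mathcal{P}_k(X)|\le(k+1)^{|X|}$ then yields
\[ q^{\otimes k}(X_{k,\delta}^{c})\le (k+1)^{|X|}\cdot 2^{-k\inf\{D(t\|q)\,:\,\|t-q\|_1>\delta\}}, \]
and Pinsker's inequality $D(t\|q)\ge(2\ln 2)^{-1}\|t-q\|_1^2$ converts the exponent into $c\delta^2$ with the universal constant $c=(2\ln 2)^{-1}$, uniformly in $q\in\Omega_k$.

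For item 2, I would decompose into type classes:
\[ r^{\otimes k}(X_{k,\delta})=\sum_{t:\,d_1(t,\Omega_k)\le\delta}r^{\otimes k}(T_t^k)\le (k+1)^{|X|}\cdot 2^{-k\inf\{D(t\|r)\,:\,d_1(t,\Omega_k)\le\delta\}}, \]
where $d_1(t,\Omega_k):=\inf_{q\in\Omega_k}\|t-q\|_1$. For any such $t$, pick $q\in\Omega_k$ with $\|t-q\|_1\le\delta$ and write
\[ D(t\|r)-D(q\|r)=H(q)-H(t)+\sum_{x\in X}(t(x)-q(x))\log\tfrac{1}{r(x)}. \]
Fannes-type continuity gives $|H(t)-H(q)|\le-\delta\log(\delta/|X|)$, and the second term is bounded by $-\delta\log r_{\min}$ in absolute value. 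Hence $D(t\|r)\ge D(q\|r)-\eta(\delta,r)\ge D(\Omega_k\|r)-\eta(\delta,r)$, which delivers the claimed bound.

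The main technical point is assembling the correct continuity estimate for $D(\cdot\|r)$ with the explicit constants appearing in $\eta(\delta,r)$: one separately controls the entropy difference via Fannes' inequality and the linear term $\sum_x(t(x)-q(x))\log(1/r(x))$ via the hypothesis $r_{\min}>0$, which is exactly why the lemma insists that $r$ has full support. The remaining ingredients---the type-counting bound $|\mathcal{P}_k(X)|\le(k+1)^{|X|}$, the type-class probability estimate, Sanov's upper bound, and Pinsker's inequality---are completely standard and enter only through textbook estimates.
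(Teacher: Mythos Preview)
Your proposal is correct and matches the approach the paper indicates (the paper omits the proof, noting only that it follows from the standard type-bounding techniques of Csisz\'ar--K\"orner and Shields). Your choice of $X_{k,\delta}$ as the union of type classes within $L^1$-distance $\delta$ of $\Omega_k$, the use of Sanov's bound plus Pinsker for item~1, and the decomposition $D(t\|r)-D(q\|r)=H(q)-H(t)+\sum_x(t(x)-q(x))\log(1/r(x))$ controlled via Fannes' inequality and the $r_{\min}$ bound for item~2, are exactly the ingredients that reproduce the stated constants in $\eta(\delta,r)$.
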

\begin{proof}
The proof uses the well known type bounding techniques from \cite{csiszar} and \cite{shields} and is therefore omitted.
\end{proof}
A (discrete) projection valued measure (PVM) on a finite dimensional Hilbert space $\cal{K}$ is a set $\mathcal{M} :=\{P_i\}_{i=1}^{m}$ consisting of projections $P_i\in\mathcal{B}(\cal{K})$ such that $\sum_{i=1}^{m}P_i=\idn_{\cal{K}}$. For two states $\rho,\sigma\in\mathcal{S}(\cal{K})$ and any  PVM $\mathcal{M}$ on $\cal{K}$ we define
\[S_{\mathcal{M}}(\rho||\sigma):=\sum_{i=1}^{m}\textrm{tr}(\rho P_i)\log \textrm{tr}(\rho P_i)-\textrm{tr}(\rho P_i)\log \textrm{tr}(\sigma P_i)  \]
if $(\textrm{tr}(\rho P_i))_{i=1}^{m}\ll (\textrm{tr}(\sigma P_i) )_{i=1}^{m}$ and 
\[S_{\mathcal{M}}(\rho||\sigma):=\infty  \]
else.
\begin{theorem}\label{abelian-approximation}
Let $\sigma\in\mathcal{S}(\hr)$ be invertible. Then for each $l\in \nn$ there is a real number $\zeta_l (\sigma)$ with $\lim_{l\to\infty}\zeta_l(\sigma)=0$ such that for any set $\Omega_l\subset \mathcal{S}(\hr)$ there is a PVM $\mathcal{M}_l=\{P_l, \idn_{\hr}^{\otimes l}-P_l\}$ on $\hr^{\otimes l}$ with
\[S_{\mathcal{M}_l}(\rho^{\otimes l}||\sigma^{\otimes l})\ge l(S(\Omega_l||\sigma)-\zeta_l(\sigma)  )  \]
for all $\rho\in \Omega_l$ with $S(\Omega_l||\sigma):=\inf_{\rho\in\Omega_l}S(\rho||\sigma)$. Consequently,
\[ \inf_{\rho\in\Omega_l}S_{\mathcal{M}_l}(\rho^{\otimes l}||\sigma^{\otimes l})\ge l(S(\Omega_l||\sigma)-\zeta_l (\sigma) ). \]
\end{theorem}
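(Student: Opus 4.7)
The plan is to realize Nagaoka's observation (explained above) in a universal-in-$\Omega_l$ form. First I would produce a projection $P_l \in \mathcal{B}(\hr)^{\otimes l}$ together with sequences $\epsilon_l \to 0$ and $\eta_l(\sigma) \to 0$ satisfying
\begin{equation*}
\inf_{\rho \in \Omega_l} \textrm{tr}(\rho^{\otimes l} P_l) \geq 1 - \epsilon_l, \qquad \textrm{tr}(\sigma^{\otimes l} P_l) \leq 2^{-l(S(\Omega_l||\sigma) - \eta_l(\sigma))},
\end{equation*}
and then set $\mathcal{M}_l := \{P_l, \idn_{\hr}^{\otimes l} - P_l\}$. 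The existence of such $P_l$ is precisely the universal direct part of Stein's Lemma for the family $\Omega_l$ against $\sigma$; I would obtain it by rerunning the quantum Sanov argument from \cite{q-sanov} with Lemma~\ref{c-sharp-sanov} substituted for the classical type bound used there. The machinery glues together the pinching map $\mathcal{P}_{\sigma^{\otimes l}}$ (whose number of spectral projections is at most $(l+1)^d$ thanks to the permutation symmetry of $\sigma^{\otimes l}$), Hayashi's pinching inequality $\rho^{\otimes l} \leq (l+1)^d\, \mathcal{P}_{\sigma^{\otimes l}}(\rho^{\otimes l})$, and the uniform classical tail bounds of Lemma~\ref{c-sharp-sanov}; the polynomial prefactors $(l+1)^{|X|}$ and the slack $\eta(\delta_l, r)$ contribute at most $O(\log l / l) + o(1)$ to $\eta_l(\sigma)$ provided $\delta_l \to 0$ decays sufficiently slowly.

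Granting the two Stein estimates, the Nagaoka step is purely algebraic. Abbreviate $p := \textrm{tr}(\rho^{\otimes l} P_l)$, $q := \textrm{tr}(\sigma^{\otimes l} P_l)$ and $H_2(x) := -x\log x - (1-x)\log(1-x)$, and expand
\begin{equation*}
S_{\mathcal{M}_l}(\rho^{\otimes l}||\sigma^{\otimes l}) = -H_2(p) - p\log q - (1-p)\log(1-q).
\end{equation*}
Since $1-q \leq 1$ the last term is non-negative; since $p \geq 1 - \epsilon_l$ and $H_2$ is symmetric around $1/2$, one gets $H_2(p) \leq H_2(\epsilon_l)$ once $\epsilon_l \leq 1/2$; and combining the Stein upper bound on $q$ with the elementary estimate $p \cdot x \geq x - (1-p)|x|$ gives $-p\log q \geq l(S(\Omega_l||\sigma) - \eta_l(\sigma)) - l\,\epsilon_l\, C(\sigma)$, where $C(\sigma) := -\log s_{\min}(\sigma) + 1 < \infty$ by invertibility of $\sigma$ (which also yields the uniform cap $S(\Omega_l||\sigma) \leq -\log s_{\min}(\sigma)$). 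Putting these together,
\begin{equation*}
S_{\mathcal{M}_l}(\rho^{\otimes l}||\sigma^{\otimes l}) \geq l\bigl(S(\Omega_l||\sigma) - \eta_l(\sigma) - \epsilon_l\, C(\sigma) - H_2(\epsilon_l)/l\bigr),
\end{equation*}
so $\zeta_l(\sigma) := \eta_l(\sigma) + \epsilon_l C(\sigma) + H_2(\epsilon_l)/l$ is independent of $\Omega_l$ and of $\rho$, satisfies $\zeta_l(\sigma) \to 0$ as $l \to \infty$, and delivers the displayed inequality for every $\rho \in \Omega_l$; taking the infimum yields the final assertion.

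The main obstacle lies entirely in the first stage: extracting the \emph{quantum} exponent $S(\Omega_l||\sigma)$ from a single, $\rho$-independent projection $P_l$, rather than the strictly weaker classical rate $D(p_\rho||r)$ that a naive pinching of $\rho^{\otimes l}$ in the eigenbasis of $\sigma$ via $p_\rho(j) = \langle e_j|\rho|e_j\rangle$ would deliver. Bridging that gap requires careful use of the permutation-symmetric spectral structure of $\sigma^{\otimes l}$ together with the sharpened polynomial tail bounds of Lemma~\ref{c-sharp-sanov}; once $P_l$ with the two displayed properties is in hand, the algebraic reduction to the binary PVM $\mathcal{M}_l$ described above is routine.
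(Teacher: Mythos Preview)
Your two-stage plan---a universal Stein projection obtained by rerunning \cite{q-sanov} with Lemma~\ref{c-sharp-sanov} in place of its classical type bound, followed by Nagaoka's binary reduction---is exactly the paper's proof, and your execution of the second stage matches (the paper crudely bounds the binary entropy by $1$ where you track $H_2(\epsilon_l)$, but this is immaterial).

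One clarification is worth recording: your informal description of the \cite{q-sanov} machinery as a single-scale pinching $\mathcal{P}_{\sigma^{\otimes l}}$ plus Hayashi's inequality is not what the paper actually does. The paper invokes a two-scale decomposition $l=km+y$, with a Hiai--Petz reduction at inner block length $m$ producing classical distributions that are genuinely i.i.d.\ at the outer level $k$---precisely the i.i.d.\ structure Lemma~\ref{c-sharp-sanov} requires; the parameters $m_l\approx\log_d l^{1/8}$ and $\delta_l=l^{-1/4}$ are then tuned so that all error terms vanish. Pinching directly at scale $l$ would either yield non-i.i.d.\ classical data on the eigenspaces of $\sigma^{\otimes l}$, or, in the full product basis, recover only the weaker rate $D(p_\rho\|r)$ that you yourself flag as the obstacle. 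The paper even remarks after its proof that Hayashi's representation-theoretic method is a possible \emph{alternative} route, but does not pursue it. Since you explicitly defer to \cite{q-sanov} for the construction and correctly identify where the quantum exponent must emerge, this is a descriptive rather than substantive discrepancy.
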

\begin{proof}
The proof is based on the following observation: Let $\mathcal{M}_l=\{P_l, \idn_{\hr}^{\otimes l}-P_l\}$ be any PVM on $\hr^{\otimes l}$ with the properties
\begin{enumerate}
\item $\textrm{tr}(\rho^{\otimes l} P_l)\ge 1-\tau_{1,l}$ for all $\rho\in\Omega_l$ with $\lim_{l\to\infty}\tau_{1,l}=0$ and
\item $\textrm{tr}(\sigma^{\otimes l}P_l)\le 2^{-l(S(\Omega_l||\sigma)- \tau_{2,l} )} $ with $\lim_{l\to\infty}\tau_{2,l}=0.$
\end{enumerate}
Then using these relations we can lower-bound $S_{\mathcal{M}_l}$ for each $\rho\in\Omega_l$ as follows: First of all, since $\sigma$ is invertible we have
\[S(\rho^{\otimes l}||\sigma^{\otimes l})< \infty  \]
for each $\rho\in\Omega_l$. Thus, the monotonicity of the relative entropy yields
\[  S_{\mathcal{M}_l}(\rho^{\otimes l}||\sigma^{\otimes l})\le S(\rho^{\otimes l}||\sigma^{\otimes l})< \infty \]
for all $\rho\in\Omega_l$. Consequently we can lower-bound $\frac{1}{l} S_{\mathcal{M}_l}(\rho^{\otimes l}||\sigma^{\otimes l}) $ using the relations 1) and 2):
\begin{eqnarray}\label{nagaoka-like-calculation}
 \frac{1}{l} S_{\mathcal{M}_l}(\rho^{\otimes l}||\sigma^{\otimes l})&\ge& -\frac{1}{l}H((\textrm{tr}(\rho^{\otimes l}P_l,\textrm{tr}(\rho^{\otimes l}(\idn_{\hr}^{\otimes l}-P_l) )  )\nonumber\\
&& -\textrm{tr}(\rho^{\otimes l}P_l)\frac{1}{l}\log \textrm{tr}(\sigma^{\otimes l}P_l)\nonumber\\
&\ge& -\frac{\log 2}{l} + \textrm{tr}(\rho^{\otimes l}P_l)(S(\Omega_l||\sigma)- \tau_{2,l} )\nonumber\\
&\ge&  -\frac{1}{l}+(1-\tau_{1,l})(S(\Omega_l||\sigma)- \tau_{2,l} )\nonumber\\
&\ge & S(\Omega_l||\sigma)-\zeta_l(\sigma),
\end{eqnarray}
with 
\begin{equation}\label{abel-approx-1}
\zeta_l(\sigma):= (1-\tau_{1,l})\tau_{2,l}-\tau_{1,l}\log \lambda_{\min}(\sigma) +\frac{1}{l},
\end{equation}
where $\lambda_{\min}(\sigma)$ denotes the smallest eigenvalue of $\sigma$.\\
Thus our remaining job is the construction of the PVM with the properties described above. To this end let $l\in \nn$ and $\Omega_l\subset \mathcal{S}(\hr)$ be given. For $m\in\nn$ we can find $k,y\in \nn$ with $0\le y<m$ such that $l=km +y$. Then applying exactly the same bounding technique as in the proof of Theorem 2 in \cite{q-sanov} but using our Lemma \ref{c-sharp-sanov} instead of their Lemma 1 we obtain for each $\delta>0$ a projection $P_{l,\delta}\in \mathcal{B}(\hr)^{\otimes l}$ with
\begin{enumerate}
\item $\textrm{tr}(\rho^{\otimes l}P_{l,\delta})\ge 1-(k+1)^{d^{m}}2^{-kc\delta^2}$ with a universal constant $c>0$ and where $d=\textrm{dim}(\hr)$,
\item 
\begin{eqnarray*}
 \frac{1}{l}\log \textrm{tr}(\sigma^{\otimes l}P_{l,\delta})&\le & -S(\Omega_l||\sigma)+d \frac{\log (m+1)}{m}\\    
&& + (d^{2m}+d^m)\frac{\log (k+1)}{km}\\
&& +\eta(\delta,\sigma),
\end{eqnarray*} 
with
\[\eta(\delta,\sigma)=-\delta\log \frac{\delta}{d}-\delta \log \lambda_{\min}(\sigma).\]
\end{enumerate}
Choosing $m=m_l:=\lceil \log_d (l^{1/8} )\rceil$ it is easily seen that for $k=k_l=\frac{l-y_l}{m_l}$ with $0\le y_l < m_l$ and $\delta_l:=l^{-1/4}$ we have
\[\lim_{l\to\infty} \tau_{1,l} =0 \textrm{ and } \lim_{l\to\infty}\tau_{2,l} =0, \]
where
\begin{equation}\label{abel-approx-2}
\tau_{1,l}:= (k_l+1)^{d^{m_l}}2^{-k_lc\delta_l ^2},  
\end{equation}
and 
\begin{equation}\label{abel-approx-3}
\tau_{2,l}:= d \frac{\log (m_l+1)}{m_l}   
 + (d^{2m_{l}}+d^{m_l})\frac{\log (k_l+1)}{k_l m_l} +\eta(\delta_l,\sigma).  
\end{equation}
The desired PVM is then given by $\mathcal{M}_l:=\{P_l,\idn_{\hr}^{\otimes l}-P_l\}$ with $P_l:=P_{l,\delta_l}$. 
\end{proof}
\begin{remark}
An alternative proof of Theorem \ref{abelian-approximation} might be based on the techniques developed by Hayashi in \cite{hayashi0}, \cite{hayashi}. He constructs there a sequence of PVM's on $\hr^{\otimes l}$ via representation theory of Lie groups which depends merely on $\sigma$ and shows how to derive Stein's Lemma. Thus we are forced to uniformly bound the errors of the first and second kind in Hayashi's setting for the whole family $\Omega_l$ in order to obtain a universal abelian approximation of the quantum relative entropy.
\end{remark}
\section{Capacity of Compound CQ-Channels}\label{compound}
Let $T$ be an arbitrary compound channel and  for a fixed $p\in \mathcal{P}(A)$ define
\[\Omega_p:=\left\{\rho_t:=\sum_{x\in A}p(x)|x\rangle\langle x |\otimes D_{t,x}:t\in T\right\},  \]
where each $\rho_t\in \Omega_p$ is seen as a density operator in $\mathcal{A}_{\textrm{diag}}\otimes \mathcal{B}(\hr)$ with
\[ \mathcal{A}_{\textrm{diag}}:=\bigoplus_{x\in A}\cc|x\rangle\langle x| \]
 being the algebra of operators diagonal w.r.t. the basis $\{|x\rangle  \}_{x\in A}$ of $\cc^{|A|}$\footnote{ $\mathcal{A}_{\textrm{diag}}$ has a natural structure of a $^{\ast}$-algebra, thus $\mathcal{A}_{\textrm{diag}}\otimes \mathcal{B}(\hr) $ is an admissible construction.}. 
Moreover, for each $t\in T$ we set
\[ \sigma_t:=\sum_{x\in A}p(x)D_{t,x}. \]
In what follows we identify the probability distribution $p$ with a diagonal density operator, i.e. we set
\[p=\sum_{x\in A}p(x)|x\rangle \langle x|\in  \mathcal{A}_{\textrm{diag}} .\]
It is well known that
\[S(\rho_t||p\otimes \sigma_t)=\chi(p,W_t)\]
holds, where $S(\rho_t||p\otimes \sigma_t) $ is the relative entropy. 
\begin{lemma}[Donald's Inequality]\label{donald}
Consider any $t,t'\in T$. Then
\begin{eqnarray*} S(\rho_{t'}||p\otimes \sigma_t)\ge S(\rho_{t'}||p\otimes \sigma_{t'})  
\end{eqnarray*}
and equality holds iff $\sigma_{t'}=\sigma_t$.
\end{lemma}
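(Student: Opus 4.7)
My plan is to compute the difference $S(\rho_{t'}\|p\otimes\sigma_t)-S(\rho_{t'}\|p\otimes\sigma_{t'})$ directly and show that it equals $S(\sigma_{t'}\|\sigma_t)$, whence the claim follows from the nonnegativity of quantum relative entropy (Klein's inequality), together with its strictness unless the two arguments coincide.

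The main calculation exploits the block-diagonal structure: because $p\in\mathcal{A}_{\mathrm{diag}}$ is diagonal in the basis $\{|x\rangle\}_{x\in A}$, I would first write
\[
\log(p\otimes\sigma_t)=\sum_{x\in A}|x\rangle\langle x|\otimes\bigl(\log p(x)\cdot\idn_{\hr}+\log\sigma_t\bigr),
\]
with the usual convention on the support of $\log$. Plugging $\rho_{t'}=\sum_x p(x)|x\rangle\langle x|\otimes D_{t',x}$ into the trace gives
\[
\textrm{tr}\bigl(\rho_{t'}\log(p\otimes\sigma_t)\bigr)=\sum_{x\in A}p(x)\log p(x)+\sum_{x\in A}p(x)\,\textrm{tr}(D_{t',x}\log\sigma_t),
\]
and by linearity the second sum equals $\textrm{tr}(\sigma_{t'}\log\sigma_t)$. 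The same identity with $t$ replaced by $t'$ yields
\[
\textrm{tr}\bigl(\rho_{t'}\log(p\otimes\sigma_{t'})\bigr)=-H(p)+\textrm{tr}(\sigma_{t'}\log\sigma_{t'}).
\]

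Subtracting the two expressions and noting that the $\textrm{tr}(\rho_{t'}\log\rho_{t'})$ and $-H(p)$ contributions cancel, I arrive at
\[
S(\rho_{t'}\|p\otimes\sigma_t)-S(\rho_{t'}\|p\otimes\sigma_{t'})=\textrm{tr}(\sigma_{t'}\log\sigma_{t'})-\textrm{tr}(\sigma_{t'}\log\sigma_t)=S(\sigma_{t'}\|\sigma_t)\ge 0,
\]
with equality iff $\sigma_{t'}=\sigma_t$. I would need to briefly address the support issue: if $\mathrm{supp}(\sigma_{t'})\not\le\mathrm{supp}(\sigma_t)$, then $S(\rho_{t'}\|p\otimes\sigma_t)=\infty$ and the inequality is trivial; otherwise all terms are finite and the computation above is rigorous.

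The only mild obstacle I foresee is bookkeeping around the convention for $\log$ on the kernel and verifying that the formula for $\log(p\otimes\sigma_t)$ still yields the correct trace $\textrm{tr}(\rho_{t'}\log(p\otimes\sigma_t))$ when some $p(x)$ vanish or $\sigma_t$ is singular; this is handled by restricting all sums to $\mathrm{supp}(p)$ and noting that the operators $D_{t',x}$ appear only weighted by $p(x)$, so zero-weight terms do not contribute.
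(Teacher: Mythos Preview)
Your proof is correct and follows essentially the same route as the paper: both arguments exploit the block-diagonal structure to evaluate $\textrm{tr}\bigl(\rho_{t'}\log(p\otimes\sigma_t)\bigr)$, arrive at the identity $S(\rho_{t'}\|p\otimes\sigma_t)=S(\rho_{t'}\|p\otimes\sigma_{t'})+S(\sigma_{t'}\|\sigma_t)$, and conclude via nonnegativity of relative entropy (with equality iff $\sigma_{t'}=\sigma_t$). Your treatment of the support case and the $p(x)=0$ bookkeeping matches the paper's handling of the $\mathrm{supp}(\rho_{t'})\not\le\mathrm{supp}(p\otimes\sigma_t)$ case.
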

\begin{proof}
The claimed inequality can be seen as a special instance of Donald's identity \cite{donald}. We give a short direct proof for reader's convenience. If $\mathrm{supp}(\rho_{t'})$ is not dominated by $ \mathrm{supp}(p\otimes \sigma_t)$ we have $S(\rho_{t'}||p\otimes \sigma_t)=+\infty$. But on the other hand $S(\rho_{t'}||p\otimes \sigma_{t'})=\chi(p,W_{t'})<+\infty$ for any $t'\in T$. Thus the claimed inequality is trivially fulfilled and is always strict in this case.\\
Assume now that $\mathrm{supp}(\rho_{t'})$ is dominated by $ \mathrm{supp}(p\otimes \sigma_t)$, then we obtain
\begin{eqnarray*}
 S(\rho_{t'}||p\otimes \sigma_t)&=&\textrm{tr}(\rho_{t'}\log\rho_{t'}-\rho_{t'}\log p\otimes \sigma_t) \\
&=& -S(\rho_{t'})-\textrm{tr}(\rho_{t'}\log p\otimes \sigma_t)\\
&=& -S(\rho_{t'})+S(p)-\textrm{tr}(\sigma_{t'}\log\sigma_t)\\
&=& -S(\rho_{t'})+S(p)-\textrm{tr}(\sigma_{t'}\log\sigma_t)\\
& & +\textrm{tr}(\sigma_{t'}\log \sigma_{t'})-\textrm{tr}(\sigma_{t'}\log \sigma_{t'})\\
&=& -S(\rho_{t'})+S(p)+S(\sigma_{t'})\\
& & +\textrm{tr}(\sigma_{t'}\log \sigma_{t'}-\sigma_{t'}\log \sigma_t)\\
&=& S(\rho_{t'}||p\otimes \sigma_{t'})+S(\sigma_{t'}||\sigma_{t})\\
&\ge& S(\rho_{t'}||p\otimes \sigma_{t'}),
\end{eqnarray*}
where we used the fact that $S(\sigma_{t'}||\sigma_{t})\ge 0 $ in the last line. We are done now since $S(\sigma_{t'}||\sigma_{t})=0 $ iff $\sigma_{t'}=\sigma_t$.
\end{proof}
\begin{remark}
A glance at the proof of Lemma \ref{donald} shows that the following stronger conclusion holds\footnote{We would like to thank the Associate Editor for pointing out this improvement of Lemma \ref{donald}}. For any $t\in T$ and \emph{any} state $\sigma\in \mathcal{S}(\hr)$
\[S(\rho_{t'}||p\otimes \sigma)\ge S(\rho_{t'}||p\otimes \sigma_{t'})  \]
with equality iff $\sigma=\sigma_{t'}$.
\end{remark}

For given $p\in\mathcal{P}(A)$ and $t\in T$ we set
\[S(\Omega_{p}||p\otimes \sigma_t):=\inf_{r\in T}S(\rho_r||p\otimes \sigma_t).   \]
\begin{lemma}\label{equality-2}
For each  $p\in\mathcal{P}(A)$ we have
\[ \inf_{t'\in T}S(\Omega_{p}||p\otimes \sigma_{t'})=\inf_{t'\in T}S(\rho_{t'}||p\otimes \sigma_{t'}).  \]
\end{lemma}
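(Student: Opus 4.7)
The plan is straightforward: unfold the definition of $S(\Omega_{p}\|p\otimes\sigma_{t'})$ as a nested infimum, then squeeze by applying Donald's inequality in both directions. Writing
\[
\inf_{t'\in T}S(\Omega_{p}\|p\otimes\sigma_{t'}) = \inf_{t'\in T}\inf_{r\in T}S(\rho_{r}\|p\otimes\sigma_{t'}),
\]
the task reduces to comparing this double infimum with the diagonal infimum $\inf_{t'\in T}S(\rho_{t'}\|p\otimes\sigma_{t'})$.

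For the inequality $\le$, I would just restrict the inner infimum to the diagonal by choosing $r=t'$, which gives $\inf_{r}S(\rho_{r}\|p\otimes\sigma_{t'})\le S(\rho_{t'}\|p\otimes\sigma_{t'})$ for every $t'$, and then take $\inf_{t'}$ on both sides. No invocation of Donald's inequality is needed here.

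For the reverse inequality $\ge$, I would invoke Lemma \ref{donald} with the roles $t'\rightsquigarrow r$ and $t\rightsquigarrow t'$: for any $t',r\in T$,
\[
S(\rho_{r}\|p\otimes\sigma_{t'})\;\ge\;S(\rho_{r}\|p\otimes\sigma_{r})\;\ge\;\inf_{s\in T}S(\rho_{s}\|p\otimes\sigma_{s}).
\]
Taking first the infimum over $r$ and then over $t'$ on the left yields the desired bound $\inf_{t'}\inf_{r}S(\rho_{r}\|p\otimes\sigma_{t'})\ge\inf_{s}S(\rho_{s}\|p\otimes\sigma_{s})$, which combined with the first step closes the two-sided estimate.

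There is no real obstacle here; the lemma is essentially a bookkeeping consequence of Donald's inequality, whose role is to say that for each fixed state $\rho_{r}$ the optimal reference of the form $p\otimes\sigma_{t'}$ is attained at $t'=r$. The only thing to be slightly careful about is that the infimum over $t'$ on the right-hand side $\inf_{t'}S(\rho_{t'}\|p\otimes\sigma_{t'})=\inf_{t'}\chi(p,W_{t'})$ is finite (since each $\chi(p,W_{t'})<\infty$), so the manipulations with infima are harmless and no measurability or attainment hypothesis on $T$ is required.
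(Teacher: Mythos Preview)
Your proposal is correct and follows essentially the same approach as the paper: both obtain the trivial inequality $\le$ by restricting the double infimum to the diagonal, and both derive $\ge$ from Donald's inequality (Lemma~\ref{donald}) applied in the form $S(\rho_r\|p\otimes\sigma_{t'})\ge S(\rho_r\|p\otimes\sigma_r)$. The only difference is cosmetic: the paper runs an $\eps/2$-approximation of the double infimum before invoking Donald's inequality, whereas your pointwise bound followed by taking infima is slightly cleaner and makes the $\eps$-argument unnecessary.
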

\begin{proof}
It is clear that $ \inf_{t'\in T}S(\Omega_{p}||p\otimes \sigma_{t'})\le \inf_{t'\in T}S(\rho_{t'}||p\otimes \sigma_{t'})$
holds. For the reverse inequality we choose an arbitrary $\eps >0$ and a $t(\eps)\in T$ with
\begin{equation}\label{trivial-1}
  S(\Omega_{p}||p\otimes \sigma_{t(\eps)})\le \inf_{t'\in T}S(\Omega_{p}||p\otimes \sigma_{t'})+\frac{\eps}{2},
\end{equation}
and a $s(\eps)\in V$ such that
\begin{eqnarray}\label{trivial-2}
  S(\rho_{s(\eps)}||p\otimes \sigma_{t(\eps)})&\le & S(\Omega_{p}||p\otimes \sigma_{t(\eps)})+\frac{\eps}{2}\nonumber\\
&\le& \inf_{t'\in T}S(\Omega_{p}||p\otimes \sigma_{t'})\nonumber\\
&&+\eps
\end{eqnarray}
where the last line follows from (\ref{trivial-1}). Donald's inequality, Lemma \ref{donald}, shows that $S(\rho_{s(\eps)}||p\otimes \sigma_{s(\eps)}) \le S(\rho_{s(\eps)}||p\otimes \sigma_{t(\eps)} )$,
and consequently by (\ref{trivial-2}) that
\[ \inf_{t'\in T}S(\rho_{t'}||p\otimes \sigma_{t'})\le \inf_{t'\in T}S(\Omega_{p}||p\otimes \sigma_{t'})+\eps  \]
holds for every $\eps>0$. This shows our claim.
\end{proof}
\subsection{The Direct Part of the Coding Theorem}
The crucial point in our code construction for the compound cq-channels will be following one-shot version of the coding theorem which is based on (and is an easy consequence of) the ideas developed by Hayashi and Nagaoka in \cite{hayashi-nagaoka}. In order to formulate the result properly we need some notation. Let $W:K\to\mathcal{S}(\mathcal{K})$ be any cq-channel with finite input alphabet $K$ and finite-dimensional output Hilbert space $\mathcal{K}$. Let $D_k:=W(k)$ for all $k\in K$. For any $w\in \mathcal{P}(K)$ we consider the states 
\[ \rho:=\sum_{k\in K}w(k)|k\rangle\langle k|\otimes D_k, \]
and $w\otimes \sigma $ with
\[\sigma=\sum_{k\in K}w(k)D_{k}  \]
acting on the Hilbert space $\cc^{|K|}\otimes \mathcal{K}$. Let $\mathcal{B}_{\textrm{diag}}$ denote  the set of operators on $\cc^{|K|}$ that are diagonal with respect to the orthonormal basis $\{|k\rangle  \}_{k\in K}$.
\begin{theorem}[Hayashi \& Nagaoka \cite{hayashi-nagaoka}]\label{one-shot-coding-theorem}
Given any cq-channel $W:K\to\mathcal{S}(\mathcal{K})$ and $w\in\mathcal{P}(K)$ with finite set $K$ and finite-dimensional Hilbert space $\mathcal{K}$. Let $P\in \mathcal{B}_{\textrm{diag}}\otimes \mathcal{B}(\mathcal{K})$ be a projection with
\begin{enumerate}
\item $\textrm{tr}(\rho P)\ge 1-\lambda$ with some $\lambda>0$ and
\item $\textrm{tr}((w\otimes \sigma)P)\le 2^{-\mu}$ for some $\mu>0$.
\end{enumerate}
Then for each $0<\gamma<\mu$ we can find $k_{1},\ldots ,k_{[2^{\mu-\gamma}]}\in K$ and $b_1,\ldots ,b_{[2^{\mu-\gamma}] }\in \mathcal{B}(\mathcal{K})$ with $b_i\ge 0$ and $\sum_{i=1}^{[2^{\mu-\gamma}] }b_i\le \idn_{\mathcal{K}} $ such that
\[\frac{1}{[2^{\mu-\gamma}] }\sum_{i=1}^{[2^{\mu-\gamma}] }(1-\textrm{tr}(D_{k_i}b_i ))\le 2\cdot\lambda +4\cdot 2^{-\gamma}. \]
\end{theorem}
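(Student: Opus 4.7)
The plan is to follow the random coding argument of Hayashi and Nagaoka, which combines a pretty good measurement-style decoder with their operator inequality. First I would exploit the block-diagonal structure of $P$: since $P\in\mathcal{B}_{\textrm{diag}}\otimes\mathcal{B}(\mathcal{K})$ is a projection, it can be written $P=\sum_{k\in K}|k\rangle\langle k|\otimes P_k$ with each $P_k$ a projection on $\mathcal{K}$. Translating the two hypotheses through this decomposition gives
\[\sum_{k}w(k)\,\textrm{tr}(D_kP_k)\ge 1-\lambda,\qquad \sum_{k}w(k)\,\textrm{tr}(\sigma P_k)\le 2^{-\mu}.\]

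Next I would draw codewords $k_1,\ldots,k_N$ (where $N:=[2^{\mu-\gamma}]$) i.i.d.\ from $w$, and for each realization build the decoder
\[b_i:=\Bigl(\textstyle\sum_{j=1}^{N}P_{k_j}\Bigr)^{-1/2}P_{k_i}\Bigl(\textstyle\sum_{j=1}^{N}P_{k_j}\Bigr)^{-1/2},\]
which automatically satisfies $b_i\ge 0$ and $\sum_i b_i\le \idn_{\mathcal{K}}$. The decisive technical input is the Hayashi--Nagaoka operator inequality: for any $0\le S\le \idn$ and $T\ge 0$,
\[\idn-(S+T)^{-1/2}S(S+T)^{-1/2}\le 2(\idn-S)+4T.\]
Applying this with $S=P_{k_i}$ and $T=\sum_{j\ne i}P_{k_j}$ yields, for every $i$,
\[\idn-b_i\le 2(\idn-P_{k_i})+4\sum_{j\ne i}P_{k_j}.\]

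Taking traces against $D_{k_i}$ and averaging over $i$ gives an upper bound on the average error probability in terms of the codeword realization; I would then take expectation over the random codewords. For the first term, independence and the first hypothesis give $\mathbb{E}_{k}\,\textrm{tr}(D_k(\idn-P_k))\le\lambda$. For the second term, independence of $k_i$ and $k_j$ together with $\mathbb{E}_{k'}P_{k'}=\sum_{k'}w(k')P_{k'}$ and the second hypothesis yield
\[\mathbb{E}\,\textrm{tr}\Bigl(D_{k_i}\sum_{j\ne i}P_{k_j}\Bigr)=(N-1)\,\textrm{tr}((w\otimes\sigma)P)\le (N-1)2^{-\mu}\le 2^{-\gamma}.\]
Combining gives an expected average error bounded by $2\lambda+4\cdot 2^{-\gamma}$, so at least one realization of the codewords attains this bound, yielding the desired deterministic code.

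The only non-routine ingredient is the operator inequality itself; once that is granted, everything is a direct computation. The main obstacle in writing this out cleanly is keeping the block-diagonal bookkeeping consistent (so that the diagonal components $P_k$ really are projections and the two trace identities line up with $\rho$ and $w\otimes\sigma$), and ensuring the rounding $N=[2^{\mu-\gamma}]\le 2^{\mu-\gamma}$ is used in the right place so the final constant is exactly $4\cdot 2^{-\gamma}$.
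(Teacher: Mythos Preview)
Your proposal is correct and follows essentially the same route as the paper's proof in Appendix~\ref{proof-one-shot-coding-theorem}: decompose $P$ block-diagonally into projections $P_k$, draw codewords i.i.d.\ from $w$, build the square-root (pretty-good) decoder, apply the Hayashi--Nagaoka operator inequality Theorem~\ref{hayashi-nagaoka}, take expectations, and derandomize. The bookkeeping points you flag (projection property of the $P_k$, the two trace identities, and the use of $N\le 2^{\mu-\gamma}$) are exactly the ones the paper handles, so there is nothing missing.
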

\begin{proof}
All arguments needed in the proof of this theorem are contained explicitly or implicitly in \cite{hayashi-nagaoka}. We provide the proof in Appendix \ref{proof-one-shot-coding-theorem} for completeness and in order to make the presentation more self-contained.
\end{proof}
As in the classical approaches to the direct part of the coding theorem we need a discrete approximation of our compound cq-channel. A partition $\Pi$ of $\mathcal{S}(\hr)$ is a family $\{\pi_1,\ldots ,\pi_y  \}$ of subsets of $\mathcal{S}(\hr)$ such that $\pi_i\cap \pi_j=\emptyset $ for $i\neq j $ and $\mathcal{S}(\hr)=\bigcup_{i=1}^{y}\pi_i$ hold. We say that the diameter of the partition $\Pi=\{\pi_1,\ldots ,\pi_y  \}$ of $\mathcal{S}(\hr)$ is at most $\kappa>0$ if
\[\sup_{\rho,\sigma\in\pi_i}||\rho-\sigma||_1\le\kappa \qquad \forall i=1,\ldots, y.  \]
We borrow from \cite{winter-diss} a basic partitioning result for $\mathcal{S}(\hr)$ which is proven by a packing argument in the $d^{2}$-dimensional cube.
\begin{theorem}[Winter, Lemma II.8 in \cite{winter-diss}]
For any $\kappa>0$ there is a partition $\Pi=\{\pi_i,\ldots ,\pi_y  \} $ of $\mathcal{S}(\hr)$ having diameter at most $\kappa$ with $y\le K\kappa^{-d^2}$, where the number $K>0$ depends only on the dimension $d$ of $\hr$.
\end{theorem}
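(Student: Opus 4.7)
The plan is to realize $\mathcal{S}(\hr)$ as a bounded convex subset of the real vector space $\mathrm{Herm}(\hr)$ of Hermitian operators on $\hr$, which has real dimension $d^{2}$. After fixing an orthonormal basis of $\mathrm{Herm}(\hr)$ with respect to the Hilbert-Schmidt inner product, I identify $\mathrm{Herm}(\hr)$ with $\rr^{d^{2}}$ linearly. The set $\mathcal{S}(\hr)$ embeds into a bounded region because $\|\rho\|_{1}\le 1$ for every density operator; combined with the norm equivalence on the finite-dimensional space $\mathrm{Herm}(\hr)$, this means $\mathcal{S}(\hr)$ lies inside some cube $Q\subset \rr^{d^{2}}$ whose side length is a universal constant depending only on $d$.

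Next, I would subdivide $Q$ into a grid of small axis-parallel subcubes of side length $s$, where $s$ is chosen so that each subcube has trace-norm diameter at most $\kappa$. Using the equivalence $\|\cdot\|_{1}\le c_{d}\|\cdot\|_{2}$ on $\mathrm{Herm}(\hr)$ (with $c_{d}$ depending only on $d$) together with the fact that a subcube of side $s$ in the chosen coordinates has Hilbert-Schmidt diameter $s\sqrt{d^{2}}=sd$, it suffices to pick $s$ of the form $s=c'\kappa$ with a dimension-dependent constant $c'$. The number of subcubes in the grid covering $Q$ is then at most $(L/s)^{d^{2}}\le K'\kappa^{-d^{2}}$ where $L$ is the side length of $Q$ and $K'=K'(d)$.

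Finally, I would convert this cover into an actual partition: intersect each subcube with $\mathcal{S}(\hr)$, discard empty intersections, and break ties on the shared boundary faces by a fixed convention (e.g.\ include a face with one of its two adjoining subcubes only, using lexicographic ordering). This operation preserves the trace-norm diameter bound and the count, and produces disjoint sets $\pi_{1},\ldots ,\pi_{y}$ whose union is $\mathcal{S}(\hr)$ with $y\le K\kappa^{-d^{2}}$ for a constant $K=K(d)$.

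The only delicate point I expect is making sure the diameter bound is exactly $\kappa$ rather than a multiple of it, but since the constants $c_{d}$ and $c'$ are absorbed into the constant $K$ in front of $\kappa^{-d^{2}}$, this only affects $K$ and not the stated dependence on $\kappa$ and $d$. Everything else reduces to a volume-counting argument in $\rr^{d^{2}}$, so no further quantum input is needed.
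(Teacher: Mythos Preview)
Your proposal is correct and matches the approach the paper attributes to Winter: the paper does not give its own proof but merely cites \cite{winter-diss} and remarks that the result ``is proven by a packing argument in the $d^{2}$-dimensional cube,'' which is exactly what you outline. The only cosmetic point is that the ceiling in the cube count gives $(\lceil L/s\rceil)^{d^{2}}$ rather than $(L/s)^{d^{2}}$, but this is absorbed into the constant $K$ since the trace-norm diameter of $\mathcal{S}(\hr)$ is bounded by $2$.
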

Applying this result $|A|$-times outputs for each $\kappa>0$ a partition $\Pi$ of the set of cq-channels $CQ(A,\hr)$ with input alphabet $A$ and output Hilbert space $\hr$ with at most $K^{|A|}\cdot \kappa^{-|A|d^{2}}$ elements.
For $n\in \nn$ we choose $\kappa=\kappa_n:=\frac{1}{n^2}$ and a partition $\Pi_{\kappa_n}=\{\pi_{1,n},\ldots ,\pi_{y,n}  \}$ of $CQ(A ,\hr)$ with at most $K^{|A|}\cdot n^{|A|d^{2}}$ elements and diameter not exceeding $\kappa_n$. This $\Pi_{\kappa_n}$ produces a partition 
\[  \Pi'_{n}:=\{ \pi_{i,n}\cap T: i=1,\ldots,y, \pi_{i,n}\cap T\neq \emptyset \},  \]
of the given compound cq-channel T. From each $\pi_{i,n}\cap T\neq \emptyset$ we select one cq-channel $W_{t_i}$ and denote this finite set of channels by $T'_n$.\\
Let $U:A\to \mathcal{S}(\hr)$ denote the useless cq-channel $U(x):=(1/d)\cdot \idn_{\hr}$. We set $W'_t:=(1-\frac{1}{n^2})W_t+\frac{1}{n^2}U$ for all $t\in T'_n$. The resulting set of channels will be denoted by $T_n$. Written in terms of density operators this defining relation means that we consider
\begin{equation}\label{approx-density}
 D'_{t,x}:=(1-\frac{1}{n^2})D_{t,x}+\frac{1}{n^2 d}\idn_{\hr},  
\end{equation}
for all $t\in T'_n$ and all $x\in A$.
\begin{lemma}\label{T-n-vs-T}
Let $T$ be any compound cq-channel and choose $n\in \nn$. Then the associated compound cq-channel $T_n$ has the following properties:
\begin{enumerate}
\item $|T_n|\le K^{|A|} \cdot n^{|A|d^{2}}$.
\item For each $t\in T$ we can find at least one $s\in T_n$ such that for all $x^n\in A^n$ 
\[||D_{t,x^n}-D'_{s,x^n}||_1\le \frac{4}{n},  \]
where $||\cdot||_{1}$ denotes the trace distance. The same statement holds if we reverse the roles of $t\in T$ and $s\in T_n$.
\item There is a constant $C=C(d)$ such that for each $p\in\mathcal{P}(A)$ and all $n\in \nn$
\[|\min_{s\in T_n}\chi(p,W'_s)-\inf_{t\in T}\chi(p,W_t)|\le C/n\]
holds.
\end{enumerate}
\end{lemma}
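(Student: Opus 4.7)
The plan is to prove the three claims in order, using only the product structure of the partition, elementary properties of trace distance, and Fannes' inequality. For claim (1), I note that $CQ(A,\hr)\cong \mathcal{S}(\hr)^{|A|}$, so taking the product of $|A|$ copies of Winter's partition of $\mathcal{S}(\hr)$ at scale $\kappa_n=1/n^2$ yields a partition $\Pi_{\kappa_n}$ with at most $(K\kappa_n^{-d^2})^{|A|}=K^{|A|}n^{|A|d^{2}}$ cells; since $T_n$ contains at most one element per non-empty cell of $T$, the cardinality bound follows immediately.

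For claim (2), given $t\in T$, I would select $s\in T_n$ as the element constructed from the representative $W_{t_i}$ of the unique cell $\pi_{i,n}$ containing $W_t$. The diameter hypothesis $\kappa_n=1/n^2$ gives $\|D_{t,x}-D_{t_i,x}\|_1\le 1/n^2$, while the definition (\ref{approx-density}) of $D'_{s,x}$ yields
\[\|D_{t_i,x}-D'_{s,x}\|_1 = \tfrac{1}{n^2}\,\|D_{t_i,x}-\tfrac{1}{d}\idn_{\hr}\|_1\le \tfrac{2}{n^2},\]
so $\|D_{t,x}-D'_{s,x}\|_1\le 3/n^2$ for every $x\in A$. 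The standard telescoping bound $\|\bigotimes_{j=1}^{n}A_j-\bigotimes_{j=1}^{n}B_j\|_1\le \sum_{j=1}^{n}\|A_j-B_j\|_1$ for products of density operators then gives $\|D_{t,x^n}-D'_{s,x^n}\|_1\le 3/n\le 4/n$. The reverse direction is easier: given $s\in T_n$, I just take $t=t_i\in T$, for which only the second inequality is needed.

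For claim (3), I plan to transfer the trace-distance estimate from (2) to a Holevo-information estimate via Fannes' inequality. With $t\in T$ and $s\in T_n$ associated as in (2), both $\|D_{t,x}-D'_{s,x}\|_1$ and $\|\sigma_t-\sigma'_s\|_1$ are at most $3/n^2$, where $\sigma_t=\sum_x p(x)D_{t,x}$ and $\sigma'_s=\sum_x p(x)D'_{s,x}$. Applying Fannes to each of the $|A|+1$ entropy terms in $\chi(p,W_t)=S(\sigma_t)-\sum_x p(x)S(D_{t,x})$ gives
\[|\chi(p,W_t)-\chi(p,W'_s)|\le C'(d)\frac{\log n}{n^2}\le \frac{C(d)}{n}\]
uniformly in $p$, after adjusting $C(d)$ to absorb any small-$n$ contribution via the crude bound $\chi\le \log d$. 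Applying this estimate in both directions and taking infima (the one over the finite set $T_n$ is attained) would yield the claimed two-sided bound.

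The only point that requires care, and the main potential pitfall, is to ensure that the constant $C(d)$ in the final step depends only on $d$ and not on $|T|$ (which may be infinite) or on the input distribution $p$. This is automatic since Fannes' bound depends only on the trace distance of the states compared and on the output dimension, and the correspondence $t\leftrightarrow s$ between channels in $T$ and $T_n$ lives inside a single partition cell, whose diameter is controlled purely by $n$ and $d$.
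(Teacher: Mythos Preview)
Your proposal is correct and follows essentially the same route as the paper: the cardinality bound from the product partition, the telescoping estimate $\|\bigotimes \rho_i-\bigotimes \sigma_i\|_1\le\sum\|\rho_i-\sigma_i\|_1$ combined with the single-letter bounds coming from the cell diameter and the mixing with $\frac{1}{d}\idn_{\hr}$, and finally Fannes' inequality transferred to the Holevo information. The only cosmetic differences are that the paper uses the slightly looser per-letter bound $4/n^2$ (versus your $3/n^2$) and, for part (3), argues by first fixing the minimizer $s_n\in T_n$ and then exhibiting a nearby $t\in T$, whereas you phrase it as a two-sided correspondence before taking infima; both yield the same conclusion.
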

\begin{proof}
The first part of the lemma is clear by our construction of $T_n$.\\
The second assertion follows from the general fact that for states $\rho_1,\ldots ,\rho_n, \sigma_1,\ldots ,\sigma_n\in\mathcal{S}(\hr)$ the relation
\[||\rho_1\otimes \ldots \otimes \rho_n-\sigma_1\otimes \ldots \otimes \sigma_n||_1\le \sum_{i=1}^{n}||\rho_i-\sigma_i||_1 \]
holds and that for each $t\in T$ we can find $s'\in T'_n$ with $||D_{t,x}-D_{s',x}||_1\le 2/n^2$ for all $x\in A$ and to each $s'\in T'_n$ there is obviously $s\in T_n$ with $||D_{s',x}-D'_{s,x}   ||_1\le  2/n^2 $ for all $x\in A$ .\\
The last part of the lemma is easily deduced from the Fannes inequality \cite{fannes} which states that for any states $\rho,\sigma\in \mathcal{S}(\hr)$ with $||\rho-\sigma||_1\le \delta \le 1/e$ we have $|S(\rho)-S(\sigma) |\le \delta\log d -\delta\log\delta$.
 Indeed, for each $n\in \nn$ choose $s_n\in T_n$ with
\begin{equation}\label{T-n-vs-T-1}
  \chi(p,W'_{s_n})= \min_{s\in T_n}\chi(p,W'_t).
\end{equation}
Then observing that
\[\chi(p,W'_{s_n})=S(\sum_{x\in A}p(x)D'_{t_{n},x})-\sum_{x\in A}p(x)S(D'_{t_{n},x}),  \]
and that we can find $t\in T$ with $||D_{t,x}-D'_{s_n,x}||_1\le 4/n^2  $ for all $x\in A$ leads via Fannes inequality to
\begin{equation}\label{T-n-vs-T-2}
  |\chi(p,W'_{s_n})-\chi(p,W_{t})|\le 2(\frac{4}{n^2}\log d -\frac{4}{n^2}\log\frac{4}{n^2} ), 
\end{equation}
provided that $n\ge \sqrt{\frac{e}{4}}$. (\ref{T-n-vs-T-1}) and (\ref{T-n-vs-T-2}) show that
\begin{eqnarray*}
  \inf_{t\in T}\chi(p,W_t)&\le& \min_{s\in T_n}\chi(p,W'_s)\\
&& +2(\frac{2}{n^2}\log d -\frac{2}{n^2}\log\frac{2}{n^2} )\\
&=& \min_{s\in T_n}\chi(p,W'_s)+ O(n^{-1}).
 \end{eqnarray*}
A similar argument shows the reverse inequality and we are done.
\end{proof}
\begin{remark}
At this point we pause for a moment to indicate why our discretization Lemma \ref{T-n-vs-T} does not suffice to reduce the capacity problem for arbitrary sets of channels to the finite case solved by Datta and Dorlas \cite{datta-dorlas}. Let us assume that we want to construct codes for the channel $T_n$ of block length $n$
The proof strategy in \cite{datta-dorlas}, translated into the setting of our Lemma \ref{T-n-vs-T} would consist of a combination of a measurement that detects the branch from $T_n$ combined with reliable codes for individual channels from $T_n$. In order to detect which channel is in use during the transmission Datta and Dorlas construct a sequence $x^{mL_n}\in A^{mL_n}$, $L_n:=\binom{|T_n|}{2}$, and a PVM in $\{p_t^{mL_n}  \}_{t\in T_n}$ in $\mathcal{B}(\hr^{\otimes mL_n})$ with
\begin{equation}\label{datta-dorlas-exclusion}
\textrm{tr}(p_t^{mL_n}W_t^{m L_n}(x^{mL_n})  )\ge (1- |T_n|f^{m})^{|T_n|-1},  
\end{equation}
where $f\in (0,1)$. It is easily seen using standard volumetric arguments with respect to the Hausdorff measure on the set of cq-channels that for open sets $T$ (w.r.t. the relative topology) of channels $|T_n|\ge \textrm{poly}(n)$ with degree strictly larger than $1$. Hence, $L_n=\textrm{poly}(n)$. And since the rightmost quantity in (\ref{datta-dorlas-exclusion})
has to approach $1$ we have to choose $m=m(n)$ as an increasing sequence depending on $n$. Thus for large $n$ $m_n L_n=m_n\textrm{poly}(n)\ge n$ and no more block length is left for coding.  
\end{remark}
In the course of the proof of Theorem \ref{comp-direct-part} we will need two probabilistic inequalities which go back to the work of Blackwell, Breiman, and Thomasian \cite{blackwell} and Hoeffding \cite{hoeffding}.
Let $\{V_{t} \}_{t\in T}$ be a finite set of stochastic matrices $V_{t}:X\to J$ with finite sets $X$ and $J$. For $r\in \mathcal{P}(X)$ we set
\[p_t(x,j):=r(x)V_t(j|x) \qquad (x\in X,j\in J),   \]
and
\[q_t(j):=\sum_{x\in X}r(x)V_{t}(j|x).  \]
Moreover, for each $a\in \nn$ we define the averaged channel $V^a:X^a\to J^a$ by 
\[V^a(j^a|x^a ):=\frac{1}{|T|}\sum_{t\in T}V_t^a(j^a|x^a),\] 
the joint input-output distribution
\[p'^a(x^a,j^a):=r^{\otimes a}(x^a)V^a(j^a|x^a), \]
and 
\[q^a:=\frac{1}{|T|}\sum_{t\in T}q_{t}^{\otimes a}.\] 
For each $t\in T$ and $a\in \nn$ let
\begin{equation}\label{ind-info-t}
 i_{t}^{a}(x^a,j^a):=\frac{1}{a}\log \frac{V_{t}^{a}(j^a|x^a)}{q_{t}^{\otimes a}(j^a)}, 
\end{equation}
and
\begin{equation}\label{ind-info}
i^{a}(x^a,j^a):=\frac{1}{a}\log \frac{V^{a}(j^a|x^a)}{q^{a}(j^a)},  
\end{equation}
where $x^a\in X^a$ and $j^a\in J^a$.
\begin{theorem}[Blackwell, Breiman, Thomasian \cite{blackwell}]\label{finite-feinstein}
With the notation introduced in preceding paragraph we have for all $\alpha,\beta\in\rr$
\[\mathbb{P}(i^a\le \alpha  )\le \frac{1}{|T|}\sum_{t\in T}\mathbb{P}_t (i_t^a\le \alpha +\beta  )+|T|2^{-a\beta}.\]
\end{theorem}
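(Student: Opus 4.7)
The plan is to first exploit the convex-combination structure $V^a=\frac{1}{|T|}\sum_t V_t^a$ (equivalently $p'^a=\frac{1}{|T|}\sum_t p_t^{\otimes a}$ with $p_t^{\otimes a}(x^a,j^a):=r^{\otimes a}(x^a)V_t^a(j^a|x^a)$), which immediately gives
\[\mathbb{P}(i^a\le \alpha)=\frac{1}{|T|}\sum_{t\in T}\mathbb{P}_t(i^a\le \alpha),\]
where $\mathbb{P}_t$ denotes the law of $p_t^{\otimes a}$. This reduces the problem to proving, for each fixed $t$, the per-branch bound
\[\mathbb{P}_t(i^a\le \alpha)\le \mathbb{P}_t(i_t^a\le \alpha+\beta)+|T|\,2^{-a\beta},\]
after which averaging over $t$ yields the theorem.

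For each fixed $t$ I would introduce the output-typicality event
\[E_t:=\bigl\{j^a\in J^a: q_t^{\otimes a}(j^a)\ge |T|\,2^{-a\beta}\,q^a(j^a)\bigr\}\]
and split $\mathbb{P}_t(i^a\le \alpha)\le \mathbb{P}_t(\{i^a\le \alpha\}\cap E_t)+\mathbb{P}_t(E_t^c)$. The error term is controlled by a one-line Markov-type estimate: since $E_t^c$ depends only on $j^a$, its $\mathbb{P}_t$-mass equals $q_t^{\otimes a}(E_t^c)$, and on $E_t^c$ one has $q_t^{\otimes a}(j^a)<|T|\,2^{-a\beta}q^a(j^a)$; summing this bound over $j^a\in E_t^c$ and using that $q^a$ is a probability measure gives $\mathbb{P}_t(E_t^c)\le |T|\,2^{-a\beta}$, which is exactly the error term in the statement.

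The main (but short) remaining step is the inclusion $\{i^a\le \alpha\}\cap E_t\subseteq \{i_t^a\le \alpha+\beta\}$. The convex combination $V^a=\frac{1}{|T|}\sum_s V_s^a$ forces $V_t^a\le |T|\cdot V^a$ pointwise, so on $\{i^a\le \alpha\}$ one gets $V_t^a\le |T|\cdot 2^{a\alpha}\cdot q^a$; chaining with $q^a\le \frac{2^{a\beta}}{|T|}q_t^{\otimes a}$ on $E_t$ yields $V_t^a\le 2^{a(\alpha+\beta)}q_t^{\otimes a}$, i.e.\ $i_t^a\le \alpha+\beta$. Plugging this back and averaging over $t\in T$ completes the proof.

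The hard part is really just the calibration of the threshold in the definition of $E_t$: the factor $|T|\,2^{-a\beta}$ is chosen precisely so that the $|T|$ coming from $V_t^a\le |T|V^a$ is exactly absorbed into the shift $\beta$, while the complementary Markov estimate contributes the error $|T|\,2^{-a\beta}$ that matches the statement. No sharper probabilistic inequality is required; the whole argument is essentially a two-line change-of-measure combined with a union bound.
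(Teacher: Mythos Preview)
Your argument is correct. The decomposition $\mathbb{P}=\frac{1}{|T|}\sum_t\mathbb{P}_t$ together with the per-branch change-of-measure inclusion $\{i^a\le\alpha\}\cap E_t\subseteq\{i_t^a\le\alpha+\beta\}$ and the Markov-type tail bound $\mathbb{P}_t(E_t^c)\le |T|2^{-a\beta}$ is exactly the classical proof of this inequality; the calibration of the threshold in $E_t$ is done correctly so that the factor $|T|$ from $V_t^a\le |T|V^a$ cancels.

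There is nothing to compare against in the paper: the authors do not prove Theorem~\ref{finite-feinstein} but simply quote it from Blackwell, Breiman, and Thomasian~\cite{blackwell}. Your write-up is precisely the standard argument from that reference, so it fills in what the paper leaves to the literature.
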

Our proof of Theorem \ref{comp-direct-part} will also require Hoeffding's tail inequality:
\begin{theorem}[Hoeffding \cite{hoeffding}]\label{hoeffding}
Let $X_1,\ldots , X_a$ be independent real valued random variables such that each $X_i$ takes values in the interval $[u_i,o_i]$ with probability one, $i=1,\ldots ,a$. Then for any $\tau>0$ we have
\[ \mathbb{P}\left(\sum_{i=1}^{a}(X_i-\mathbb{E}(X_i))\ge a\tau\right)\le e^{-2\frac{a^2\tau^2}{\sum_{i=1}^{a}(o_i-u_i)^2}}  \]
and
\[ \mathbb{P}\left(\sum_{i=1}^{a}(X_i-\mathbb{E}(X_i))\le- a\tau\right)\le e^{-2\frac{a^2\tau^2}{\sum_{i=1}^{a}(o_i-u_i)^2}}  \]
\end{theorem}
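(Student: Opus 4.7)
The plan is to prove the two bounds by the standard Chernoff/Cramér exponential moment method, which reduces everything to controlling the moment generating function of a centred, bounded random variable. Set $Y_i := X_i-\mathbb{E}(X_i)$ so that $\mathbb{E}(Y_i)=0$ and $Y_i\in[u_i-\mathbb{E}(X_i),\,o_i-\mathbb{E}(X_i)]$, an interval of length $o_i-u_i$. By Markov's inequality applied to the exponential, for every $s>0$,
\[
\mathbb{P}\!\left(\sum_{i=1}^{a}Y_i\ge a\tau\right)
\le e^{-sa\tau}\,\mathbb{E}\!\left(e^{s\sum_i Y_i}\right)
= e^{-sa\tau}\prod_{i=1}^{a}\mathbb{E}\!\left(e^{sY_i}\right),
\]
where independence was used to factorise the MGF.

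Next I would establish the key one-variable lemma (Hoeffding's lemma): if $Y$ is a real random variable with $\mathbb{E}(Y)=0$ and $Y\in[u,o]$ almost surely, then $\mathbb{E}(e^{sY})\le e^{s^2(o-u)^2/8}$ for every $s\in\mathbb{R}$. The standard route is convexity: since $e^{sy}$ is convex, bound it on $[u,o]$ by the chord $\tfrac{o-y}{o-u}e^{su}+\tfrac{y-u}{o-u}e^{so}$, take expectation (the linear term vanishes because $\mathbb{E}(Y)=0$), and then show that the resulting function $\varphi(s):=\log\!\bigl(\tfrac{o}{o-u}e^{su}-\tfrac{u}{o-u}e^{so}\bigr)$ satisfies $\varphi(0)=\varphi'(0)=0$ and $\varphi''(s)\le(o-u)^2/4$ by a direct computation (the second derivative is a variance of a Bernoulli-type distribution, which is maximised at $1/4$). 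A Taylor expansion then gives $\varphi(s)\le s^2(o-u)^2/8$.

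Plugging the lemma into the Chernoff bound yields
\[
\mathbb{P}\!\left(\sum_{i=1}^{a}Y_i\ge a\tau\right)
\le \exp\!\left(-sa\tau+\tfrac{s^2}{8}\sum_{i=1}^{a}(o_i-u_i)^2\right),
\]
and the right-hand side is minimised by the choice $s=4a\tau/\sum_{i=1}^{a}(o_i-u_i)^2$, producing exactly the claimed bound $\exp\!\bigl(-2a^2\tau^2/\sum_{i=1}^{a}(o_i-u_i)^2\bigr)$. The lower tail inequality follows by applying the same argument to the variables $-X_i$, which lie in $[-o_i,-u_i]$, an interval of the same length.

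The only mildly technical step is Hoeffding's lemma itself, in particular the bound $\varphi''(s)\le(o-u)^2/4$; everything else is routine use of Markov's inequality, independence, and optimisation in $s$. Since the theorem is classical and stated here only to be invoked, I would in practice cite \cite{hoeffding} directly, but the sketch above is the shortest self-contained derivation.
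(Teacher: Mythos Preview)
Your sketch is correct and is exactly the standard Chernoff--Cram\'er argument combined with Hoeffding's lemma; the optimisation in $s$ and the symmetry argument for the lower tail are both right. Note, however, that the paper does not actually prove this theorem at all: it is stated with a citation to \cite{hoeffding} and then simply invoked in the proof of Theorem~\ref{comp-direct-part}, which is precisely what you suggest doing in your final sentence.
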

With all these preliminary results we are able now to state and prove our main objective:
\begin{theorem}[Direct Part]\label{comp-direct-part}
Let $T$ be an arbitrary compound cq-channel. Then for each $\lambda\in (0,1)$ and any $\alpha>0$ we can find $(n,M_n,\lambda)_{\textrm{max}}$-codes with
\[\frac{1}{n}\log M_n\ge \max_{p\in \mathcal{P}(A)}\inf_{t\in T}\chi(p,W_t)-\alpha,  \]
for all $n\in\nn$ with $n\ge n_0(\alpha,\lambda)$. Consequently, for each $\lambda\in (0,1)$
\[ C(T,\lambda)\ge\max_{p\in \mathcal{P}(A)}\inf_{t\in T}\chi(p,W_t).  \]
\end{theorem}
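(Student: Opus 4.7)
The plan is to combine four ingredients in a specific order: the discretization of Lemma \ref{T-n-vs-T}, the universal classical approximation of Theorem \ref{abelian-approximation}, the strengthened Donald inequality in the remark after Lemma \ref{donald}, and the Hayashi--Nagaoka one-shot coding theorem (Theorem \ref{one-shot-coding-theorem}). The classical probabilistic inequalities (Theorems \ref{finite-feinstein} and \ref{hoeffding}) will serve to control the various approximation errors uniformly over the polynomially-sized family $T_n$.

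First, I would fix $p\in\mathcal{P}(A)$ attaining $R:=\max_p\inf_t\chi(p,W_t)$. For each block length $n$, Lemma \ref{T-n-vs-T} replaces $T$ by $T_n$ with $|T_n|\le K^{|A|}n^{|A|d^{2}}$; the channels $W'_s$ then have invertible output states (thanks to the $U$-admixture) and $R_n:=\min_{s\in T_n}\chi(p,W'_s)\ge R-C/n$. Using Lemma \ref{equality-2} applied to the finite set $T_n$, select $t^*_n\in T_n$ so that $S(\Omega_p^{(n)}\|p\otimes\sigma'_{t^*_n})\approx R_n$, where $\Omega_p^{(n)}:=\{\rho_s:s\in T_n\}$. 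Feed this into Theorem \ref{abelian-approximation} with $l=n$ and invertible reference $p\otimes\sigma'_{t^*_n}$ to obtain a single PVM $\{P_n,\idn-P_n\}$ in $(\mathcal{A}_{\textrm{diag}}\otimes\mathcal{B}(\hr))^{\otimes n}$ satisfying $\textrm{tr}(\rho_s^{\otimes n}P_n)\ge 1-\tau_{1,n}$ for every $s\in T_n$ with $\tau_{1,n}\to 0$.

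The crucial next step is to observe that the Sanov-type construction underlying Theorem \ref{abelian-approximation} (imported via Lemma \ref{c-sharp-sanov}) produces a projection $P_{l,\delta}$ that does not depend on the choice of reference state, only on $\Omega_l$. Consequently the second-kind bound $\textrm{tr}(\tau^{\otimes n}P_n)\le 2^{-n(S(\Omega_p^{(n)}\|\tau)-\zeta_n(\tau))}$ holds for every invertible $\tau$. Taking $\tau=p\otimes\sigma'_s$ and invoking the strengthened Donald inequality to get $S(\Omega_p^{(n)}\|p\otimes\sigma'_s)\ge R_n$ yields the uniform estimate $\textrm{tr}((p\otimes\sigma'_s)^{\otimes n}P_n)\le 2^{-n(R_n-\zeta_n)}$ for every $s\in T_n$, with $\zeta_n\to 0$ because the smallest eigenvalue of $\sigma'_s$ is bounded below by $1/(n^{2}d)$ and $\tau_{1,n}$ decays stretched-exponentially. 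Now apply Theorem \ref{one-shot-coding-theorem} with $w=p^{\otimes n}$, $P=P_n$, $\lambda=\tau_{1,n}$, $\mu=n(R_n-\zeta_n)$, and some slowly growing $\gamma=\gamma(n)\to\infty$ with $\gamma/n\to 0$. Since the codewords and the pretty-good-measurement decoder produced by the one-shot theorem depend only on $P_n$, the same code $(x^n(i),b_i)_{i=1}^{M_n}$ works simultaneously for every $s\in T_n$, with $\frac{1}{n}\log M_n\ge R-\alpha$ for $n\ge n_0(\alpha,\lambda)$ and with average error per codeword at most $2\tau_{1,n}+4\cdot 2^{-\gamma}\to 0$. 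A joint expurgation over codewords and channels in $T_n$ converts this average bound into the maximum-error form; because $|T_n|=\textrm{poly}(n)$ while the per-channel average error is exponentially small (the concentration rates ultimately tracing back to Hoeffding-type estimates in the proof of Lemma \ref{c-sharp-sanov} and to the BBT-style union control in Theorem \ref{finite-feinstein}), the cost of this joint expurgation is subexponential and does not eat into the exponent.

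The main obstacle is precisely the uniformity of the second-kind estimate for the single universal projection $P_n$: Theorem \ref{abelian-approximation} as stated delivers only a bound against one reference state, while the one-shot coding theorem forces us to test $P_n$ against $p\otimes\sigma'_s$ for every $s\in T_n$. The resolution combines the reference-agnostic nature of the quantum Sanov projector with the strengthened Donald inequality, and the polynomial bound on $|T_n|$ then ensures that no union-bound loss is exponential in $n$, which is what keeps the construction rate-efficient.
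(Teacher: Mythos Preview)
Your proposal contains a genuine gap at its pivotal step. You assert that ``the Sanov-type construction underlying Theorem \ref{abelian-approximation} \dots\ produces a projection $P_{l,\delta}$ that does not depend on the choice of reference state, only on $\Omega_l$,'' and you use this to obtain the uniform second-kind bound $\textrm{tr}((p\otimes\sigma'_s)^{\otimes n}P_n)\le 2^{-n(R_n-\zeta_n)}$ simultaneously for every $s\in T_n$. But Theorem \ref{abelian-approximation} does not say this: it takes a fixed invertible $\sigma$ as input and returns a projector together with a second-kind bound against that $\sigma$ alone. The error term $d\log(m+1)/m$ in its proof is the signature of a Hiai--Petz pinching by $\sigma^{\otimes m}$, and the paper itself writes the refined PVM as $\mathcal{Q}_l(s)$, making the $s$-dependence explicit. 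Without the reference-agnostic claim your argument collapses: you cannot feed a single projector $P_n$ into Theorem \ref{one-shot-coding-theorem} and get the required bound $\textrm{tr}((p^{\otimes n}\otimes\sigma_s'^{\otimes n})P_n)\le 2^{-\mu}$ for every $s$, nor can you pass directly to the averaged channel, since $\sigma^{(n)}=\frac{1}{|T_n|}\sum_t{\sigma'_t}^{\otimes n}$ is not a tensor power and is not the state against which Theorem \ref{abelian-approximation} delivers anything.

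The paper's route is designed precisely to circumvent this obstacle, and it differs structurally from what you outline. It applies Theorem \ref{abelian-approximation} only on sub-blocks of length $l_n=[\sqrt{n}]$ (not $l=n$), for one particular reference $p\otimes\sigma'_{s^*}$, and then \emph{refines} the resulting two-element PVM to a rank-one PVM $\mathcal{Q}_{l_n}(s^*)$. This refinement produces genuine classical channels $V_t:A^{l_n}\to\{1,\dots,d^{l_n}\}$ whose mutual informations inherit the lower bound $I(p^{\otimes l_n},V_t)\ge l_n(\min_t\chi(p,W'_t)-\zeta_{l_n})$. Taking $a_n\approx\sqrt{n}$ i.i.d.\ copies of these classical channels, the paper invokes Theorems \ref{finite-feinstein} and \ref{hoeffding} on the classical information densities $i_t^{a}$ and $i^{a}$ of the \emph{averaged} classical channel; this is where BBT and Hoeffding actually enter the argument, not merely as background for Lemma \ref{c-sharp-sanov} as you suggest. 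The resulting classical good set lifts to a projection $P_{n,\theta}$ satisfying both hypotheses of Theorem \ref{one-shot-coding-theorem} for the averaged cq-channel $W^n=\frac{1}{|T_n|}\sum_t{W'_t}^n$, and the code obtained there is transferred to each branch at the cost of a factor $|T_n|$ in the error. Your proposal skips the essential two-scale structure ($l_n$ followed by $a_n$) and the passage through a classical compound problem, attempting to shortcut directly from Theorem \ref{abelian-approximation} at scale $n$ to the one-shot theorem; that shortcut rests on a property of the Sanov projector that the paper neither states nor establishes.
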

\begin{proof}
Our strategy will be, roughly, to construct a ``good'' projection for the averaged channel $W^n=\frac{1}{|T_n|}\sum_{t\in T_n}{W'}_{t}^{n}$ via Theorem \ref{abelian-approximation}, Theorem \ref{finite-feinstein}, and Theorem \ref{hoeffding}. This means that for a suitably chosen input distribution $p\in \mathcal{P}(A)$, the associated state
\[\rho^{(n)}=\sum_{x^n\in A^n}p^{\otimes n}(x^n)|x^n\rangle\langle x^n|\otimes \sum_{t\in T_n}W_t^{n}(x^n)  \]
and the resulting product of the marginal states 
  \[p^{\otimes n}\otimes \sigma^{(n)}  \]
we will find a projection $P_n\in (\mathcal{A}_{\textrm{diag}}\otimes \mathcal{B}(\hr))^{\otimes n}$ with
\begin{enumerate}
\item $\textrm{tr}(\rho^{(n)}P_n)\approx 1$, and
\item $\textrm{tr}((p^{\otimes n}\otimes \sigma^{(n)}  ) P_n  )\lessapprox 2^{-n \inf_{t\in T}\chi(p,W_t)}  $.
\end{enumerate}
Then we will apply Theorem \ref{one-shot-coding-theorem} to obtain a good code for $W^n$. This code performs well for the compound channel $T_n$ since the error probability depends affinely on the channel. Finally, by Lemma \ref{T-n-vs-T} we see that the code obtained in this way is also reliable for the original channel $T$.\\
Let $p=\mathrm{argmax}_{p'\in \mathcal{P}(A)}(\inf_{t\in T}\chi(p',W_t)) $. We assume w.l.o.g. that $ \inf_{t\in T}\chi(p,W_t)>0 $, because otherwise the assertion of the theorem is trivially true.\\
Our goal is to construct $(n,M_n,\frac{\lambda}{2})_{\max}$-codes $\mathcal{C}_n$ for the approximating channel $T_n$ with
\[M_n\ge 2^{n(\inf_{t\in T}\chi(p,W_t)-\alpha )}  \]
for all sufficiently large $n\in \nn$. Then by Lemma \ref{T-n-vs-T} $\mathcal{C}_n$ is also an $(n, M_n, \frac{\lambda}{2}+\frac{4}{n})_{\max}$-code for the original channel $T$. Choosing $n$ large enough we can ensure that $\frac{4}{n}\le \frac{\lambda}{2}$ and our proof would be accomplished.\\ 
In what follows we use the abbreviations
\[\Omega_{p,n}:=\{\rho'_t: \rho'_t=\sum_{x\in A}p(x)|x\rangle\langle x|\otimes D'_{t,x}, t\in T_n \}  \]
and for $t\in T_n$ we write
\[\sigma'_t:=\sum_{x\in A}p(x)D'_{t,x},  \]
where $p\in \mathcal{P}(A)$ is arbitrary. Note that by (\ref{approx-density}) we have for each $t\in T_n$
\begin{equation}\label{comp-direct-3}
  \lambda_{\min}(p\otimes \sigma'_t)\ge p_{\min}\frac{1}{n^2d}.
\end{equation}
Moreover it is clear from the definition of $T_n$ that $\mathrm{supp}(\rho'_t)$ is dominated by $\mathrm{supp}(p\otimes \sigma'_s)$ for each $t,s\in T_n$ and $\mathrm{supp}(p\otimes \sigma'_s)=\mathrm{supp}(p)\otimes \idn_{\hr}$ for all $s\in T_n$. Now choose any $s\in T_n$. By the properties of the supports just mentioned we may assume w.l.o.g. that $p\otimes \sigma_s$ is invertible. Then for fixed $l\in\nn$ we can find $a,b\in \nn$ with $n=al+b$, $0\le b<l$, and obtain from Theorem \ref{abelian-approximation} a PVM $\mathcal{M}_l=\{P_{1,l},P_{2,l}\}$ with $P_{i,l}\in (\mathcal{A}_{\textrm{diag}}\otimes \mathcal{B}(\hr))^{\otimes l}$, $i=1,2$, with
\begin{eqnarray}\label{comp-direct-4}
  S_{\mathcal{M}_l}({\rho'}_{t}^{\otimes l}||(p\otimes \sigma'_s)^{\otimes l})&\ge& l(S(\Omega_{p,n}||p\otimes \sigma'_s)-\zeta_l(p\otimes \sigma'_s))\nonumber\\
&\ge&l(\min_{t\in T_n}\chi(p,W'_t)-\zeta_l(p\otimes \sigma'_s)),\nonumber\\
\end{eqnarray}
where we have used Lemma \ref{equality-2}.\\
Since $P_{i,l}\in (\mathcal{A}_{\textrm{diag}}\otimes \mathcal{B}(\hr))^{\otimes l}$ for $i=1,2$ we can find projections $\{ r_{i,x^l} \}_{x^l\in A^l}\subset \mathcal{B}(\hr)^{\otimes l}$, $i=1,2$, with
\[P_{i,l}=\sum_{x^l\in A^l}|x^l\rangle \langle x^l|\otimes r_{i,x^l} \qquad (i=1,2). \]
The relation 
\[(\idn_{\mathcal{A}_{\textrm{diag}}}\otimes \idn_{\hr})^{\otimes l}=P_{1,l}+P_{2,l}  \]
implies
\begin{equation}\label{comp-direct-5}
  \idn_{\hr}^{\otimes l}=r_{1,x^l}+r_{2,x^l} \qquad \forall x^l\in A^l.
\end{equation}
For each $x^l\in A^l$ let $\{e_{x^l,j}  \}_{j=1}^{\textrm{tr}(r_{1,x^l})}$ be an orthonormal basis of the range of $r_{1,x^l}$ and $\{ e_{x^l,j} \}_{j=\textrm{tr}(r_{1,x^l})+1}^{d^l}$ an orthonormal basis of the range of $r_{2,x^l}$. Then by (\ref{comp-direct-5}) the set $\{|x^l\rangle \otimes e_{x^l ,j}  \}_{x^l\in A^l,j=1}^{\ \qquad d^l}$ is an orthonormal basis of $(\cc^{|A|}\otimes \hr)^{\otimes l}$, and we have by definition
\[P_{1,l}=\sum_{x^l\in A^l}|x^l\rangle\langle x^l|\otimes \sum_{j=1}^{\textrm{tr}(r_{1,x^l})}|e_{x^l,j}\rangle\langle e_{x^l,j}|, \]
and similarly
\[P_{2,l}=\sum_{x^l\in A^l}|x^l\rangle\langle x^l|\otimes \sum_{j=\textrm{tr}(r_{1,x^l})+1}^{d^l}|e_{x^l,j}\rangle\langle e_{x^l,j}|,\]
i.e. the PVM $\mathcal{Q}_l(s):=\{|x^l\rangle\langle x^l|\otimes |e_{x^l,j}\rangle\langle e_{x^l,j}|  \}_{x^l\in A^l,j=1}^{\ \qquad d^l}$ consisting of one-dimensional projections is a refinement of the PVM $\mathcal{M}_l=\{P_{1,l},P_{2,l}\}$. Thus by the monotonicity of the relative entropy and (\ref{comp-direct-4}) we obtain
\begin{equation}\label{comp-direct-6}
  S_{\mathcal{Q}_l(s)}({\rho'}_{t}^{\otimes l}||(p\otimes \sigma'_s)^{\otimes l})\ge l(\min_{t\in T_n}\chi(p,W'_t)-\zeta_l(p\otimes \sigma'_s)),
\end{equation}
for all $t\in T_n$, and consequently
\begin{equation}\label{comp-direct-7}
 \min_{s\in T_n}\min_{t\in T_n}S_{\mathcal{Q}_l(s)}({\rho'}_{t}^{\otimes l}||(p\otimes\sigma'_s)^{\otimes l})\ge l(\min_{t\in T_n}\chi(p,W'_t)-\zeta_l(p) ), 
\end{equation}
where
\[ \zeta_l(p)=\max_{s\in T_n}\zeta_l(p\otimes \sigma'_s). \]
\emph{Claim:} For the choice $l=l_n=[\sqrt{n}]$ we have
\begin{equation}\label{zeta-to-zero}
\lim_{n\to\infty}\zeta_{l_n}(p)=0.  
\end{equation}
Recall from the proof of Theorem \ref{abelian-approximation} that
\[ \zeta_{l_{n}}(p\otimes\sigma'_s)= (1-\tau_{1,l_{n}})\tau_{2,l_{n}}(s)-\tau_{1,l_{n}}\log \lambda_{\min}(p\otimes \sigma'_s) +\frac{1}{l_{n}}, \]
where $\tau_{1,l}$ and $\tau_{2,l}=\tau_{2,l}(s)$ are defined in (\ref{abel-approx-2}) and (\ref{abel-approx-3}). Our remaining goal is to prove
\begin{equation}\label{comp-direct-8}
  \lim_{n\to\infty}\max_{s\in T_n}\tau_{2,l_{n}}(s)=0,
\end{equation}
and
\begin{equation}\label{comp-direct-9}
  \lim_{n\to\infty}\tau_{1,l_{n}}\max_{s\in T_n}(-\log \lambda_{\min}(p\otimes \sigma'_s))=0.
\end{equation}
In order to simplify the notation and streamline the subsequent arguments we introduce following terminology: Let $(a_n)_{n\in\nn}$ and $(b_n)_{n\in\nn}$ be two sequences of non-negative reals. We write $a_n\sim_{+} b_n$ if $\lim_{n\to\infty}\frac{a_n}{b_n}>0$.
The validity of the assertions (\ref{comp-direct-8}) and (\ref{comp-direct-9}) can be easily deduced from (\ref{comp-direct-3}) and the facts that $k_{l_{n}}\sim_{+}\frac{n^{1/2}}{\log n^{1/16}}$, $\delta_{l_n}\sim_{+}n^{-1/8}$, and $k_{l_n}\delta_{l_n}^{2}\sim_{+} \frac{n^{3/8}}{\log n^{1/16}} $. For example we have by (\ref{comp-direct-3})
\begin{eqnarray*}
0&\le& \tau_{1,l_{n}}\max_{s\in T_n}(-\log \lambda_{\min}(p\otimes \sigma'_s))\le -\tau_{1,l_n}\log \frac{p_{\min}}{n^2\cdot d}\\
&=& 2^{-k_{l_n}\delta_{l_n}^2(c-o(n^0)-\frac{1}{k_{l_n}\delta_{l_n}^2}\log \frac{n^2 d}{p_{\min}})},  
\end{eqnarray*}
which tends to $0$ as $n\to\infty$ since $k_{l_n}\delta_{l_n}^{2}\sim_{+} \frac{n^{3/8}}{\log n^{1/16}} $. Thus, (\ref{comp-direct-9}) is proven. In order to prove (\ref{comp-direct-8}) it suffices to show
that
\[\lim_{n\to\infty} \max_{s\in T_n}(-\delta_{l_n}\log \delta_{l_n}-\delta_{l_n}\log \lambda_{\min}(p\otimes \sigma'_s) )=0. \]
But this is clear from
\begin{eqnarray*} 
\begin{split}
\max_{s\in T_n}(-\delta_{l_n}\log \delta_{l_n}-\delta_{l_n}\log \lambda_{\min}(p\otimes \sigma'_s) )&\le& -\delta_{l_n}\log \delta_{l_n}\\
&&-\delta_{l_n}\log\frac{p_{\min}}{n^2 d}
\end{split} 
\end{eqnarray*}
and $\delta_{l_n}\sim_{+}n^{-1/8} $.\\
Choose $s^{*}\in T_n$ such that
\begin{equation}\label{comp-direct-10}
s^{*}= \mathrm{argmin}_{s\in T_n}(\min_{t\in T_n}S_{\mathcal{Q}_l(s)}({\rho'}_{t}^{\otimes l}||(p\otimes\sigma'_s)^{\otimes l})),
 \end{equation}
and consider the corresponding PVM $\mathcal{Q}_{l_{n}}(s^{*})=\{|x^{l_{n}}\rangle\langle x^{l_{n}}|\otimes |e_{x^{l_{n}},j}\rangle\langle e_{x^{l_{n}},j}|  \}_{x^{l_{n}}\in A^{l_{n}},j=1}^{\ \ \qquad \quad d^{l_{n}}}  $. For each $t\in T_n$ we define
\begin{eqnarray*} 
p_t(x^{l_{n}},j)&:=&\textrm{tr}({\rho'}_{t}^{\otimes l_{n}}|x^{l_{n}}\rangle\langle x^{l_n}|\otimes |e_{x^{l_n},j}\rangle\langle e_{x^{l_{n}},j}|  )\\
&=& p^{\otimes l_{n}}(x^{l_{n}})\textrm{tr}(D'_{t,x^{l_{n}}}|e_{x^{l_{n}},j}\rangle\langle e_{x^{l_{n}},j}|  ) \\ 
&=& p^{\otimes l_{n}}(x^{l_{n}})V_t(j|x^{l_{n}}),
\end{eqnarray*}
where for each $t\in T_n$ the stochastic matrix $V_t:A^{l_{n}}\to \{1,\ldots ,d^{l_{n}}\}$ is given by
\[V_t(j|x^{l_{n}}):= \textrm{tr}(D'_{t,x^{l_{n}}}|e_{x^{l_{n}},j}\rangle\langle e_{x^{l_{n}},j}|  ) \]
for $x^{l_{n}}\in A^{l_{n}}, j\in \{1,\ldots ,d^{l_{n}}\}$. By (\ref{comp-direct-10}), (\ref{comp-direct-7}), and (\ref{zeta-to-zero}) we get
\begin{equation}\label{comp-direct-11}
  \min_{t\in T_n}I(p^{\otimes l_{n}}, V_t)\ge l_{n}(\min_{t\in T_n}\chi(p,W'_t)-\zeta_{l_{n}}(p) ),
\end{equation}
with $\lim_{n\to\infty}\zeta_{l_{n}}(p)=0$. (\ref{comp-direct-11}) implies together with Lemma \ref{T-n-vs-T} that
\begin{equation}\label{comp-direct-12}
  \frac{1}{l_{n}}\min_{t\in T_n}I (p^{\otimes l_{n}},V_t)\ge \inf_{t\in T}\chi(p,W_t)-\frac{C}{n}-\zeta_{l_{n}}(p).
\end{equation}
This implies that we can find $n_1(\eps_1)$ such that 
\begin{equation}\label{comp-direct-13}
   \frac{1}{l_{n}}\min_{t\in T_n}I (p^{\otimes l_{n}},V_t)\ge \frac{1}{2}\inf_{t\in T}\chi(p,W_t)>0 
\end{equation}
for all $n\ge n_{1}(\eps_1)$. The last inequality in (\ref{comp-direct-13}) holds by our general assumption that $\inf_{t\in T}\chi(p,W_t)>0$. Choose any $n\ge n_1(\eps_1)$. 
Let
\[\Theta:=\left\{\theta\in\rr: 0<\theta <\frac{1}{6} \inf_{t\in T}\chi(p,W_t)\right\} \]
and
\begin{eqnarray}\label{comp-direct-14} 
I_n:&=&\min_{t\in T_n}I(p^{\otimes l_{n}},V_t)\nonumber\\
&=&\min_{s\in T_n}\min_{t\in T_n}D(p_t||r \otimes q_s),
\end{eqnarray}
where $r:=p^{\otimes l_{n}}$ and $q_t(j):=\sum_{x^{l_{n}}}r(x^{l_n})V_{t}(j|x^{l_{n}})$ for all $j\in \{1,\ldots,d^{l_{n}} \}$.
Moreover, in order to simplify our notation, we set $X:=A^{l_{n}}$ and $J:=\{1,\ldots ,d^{l_{n}}\}$ and suppress the $n$-dependence of $a$ and $l$ temporarily.\\
Recalling the definition of $i_t^a$ and $i^a$ from (\ref{ind-info-t}) and (\ref{ind-info}) we obtain from Theorem \ref{finite-feinstein} for $\alpha:=I_n-2l\theta$, $\beta:=l\theta$, $\theta \in\Theta$
\begin{equation}\label{comp-direct-15}
\mathbb{P}(i^a\le I_n-2l\theta )\le \frac{1}{|T_n|}\sum_{t\in T_n}\mathbb{P}_t (i_t^a\le I_n-l\theta  )+|T_n|2^{-al\theta}.
\end{equation}
Our construction of the compound cq-channel $T_n$ implies that for all $t\in T_n,x\in X,j\in J$
\[ V_{t}(j|x)\ge \frac{1}{(n^2d)^{l}}. \]
Consequently 
\[q_t(j)\ge  \frac{1}{(n^2d)^{l}}  \]
for all $j\in J$, and 
\begin{equation}\label{comp-direct-16}
  -l\log n^2d \le \log \frac{V_{t}(j|x)}{q_{t}(j)}\le l \log n^2d.
\end{equation}
Since $i_{t}^{a}$ is a sum of i.i.d. random variables each of which takes values in $[-l\log n^2d, l\log n^2d ]$ by (\ref{comp-direct-16}), we can apply Theorem \ref{hoeffding} and obtain
\begin{equation}\label{comp-direct-17}
\mathbb{P}_t(i_{t}^{a}\le I_n-l\theta)\le e^{-\frac{a l^2\theta^2}{4 l^2(\log n^2d)^2} } 
\end{equation}
for all $t\in T_n$ since $I_n\le \mathbb{E}_{t}(i_{t}^{a})$ for all $t\in T_n$. (\ref{comp-direct-17}) and (\ref{comp-direct-15}) show that
\begin{equation}\label{comp-direct-18}
  \mathbb{P}(i^a\le I_n-2l\theta )\le e^{-\frac{a\theta^2}{16 (\log nd)^2} } +|T_n|2^{-al\theta}.
\end{equation}
Thus the set $X_{a,\theta}\subset X^a\times J^a=A^{la}\times \{1,\ldots ,d^l  \}^a$ given by
\[X_{a,\theta}:=\{ (x^a,j^a): i^a(x^a,j^a)>I_n-l\theta \},  \]
is used to construct an orthogonal projection $P_{la,\theta}\in (\mathcal{A}_{\textrm{diag}}\otimes \mathcal{B}(\hr))^{\otimes la}$ defined by
\[ P_{la,\theta}:=\sum_{(x^a,j^a)\in X_{a,\theta}}|x^a\rangle\langle x^a|\otimes |e_{x^a,j^a}\rangle\langle e_{x^a,j^a}|,  \]
where we identify each $x^a\in X^a$ with a sequence in $A^{la}$. Moreover
\[ e_{x^a,j^a}:=e_{x_1,j_1}\otimes \ldots \otimes e_{x_a,j_a}.  \]
By the definition of set $X_{a,\theta}$ the relations
\begin{equation}\label{comp-direct-19}
  p'^a(X_{a,\theta})\ge 1- e^{-\frac{a\theta^2}{16 (\log nd)^2} } -|T_n|2^{-al\theta},
\end{equation}
and
\begin{equation}\label{comp-direct-20}
  (r^{\otimes a}\otimes q^a)(X_{a,\theta})\le 2^{-a(I_n-2l\theta)}
\end{equation}
hold. (\ref{comp-direct-19}) and (\ref{comp-direct-20}) imply by definition of the projection $P_{la,\theta}\in (\mathcal{A}_{\textrm{diag}}\otimes \mathcal{B}(\hr))^{\otimes la} $ that
\begin{equation}\label{comp-direct-21}
  \textrm{tr}(\rho^{(la)}P_{la,\theta} )\ge 1- e^{-\frac{a\theta^2}{16 (\log nd)^2} } -|T_n|2^{-al\theta},
\end{equation}
and
\begin{equation}\label{comp-direct-22}
  \textrm{tr}((p^{\otimes la}\otimes \sigma^{(la)} )P_{la,\theta})\le 2^{-a(I_n-2l\theta)},
\end{equation}
where
\begin{eqnarray*}
\rho^{(la)}&:=&\frac{1}{|T_n|}\sum_{t\in T_n}{\rho'}_{t}^{\otimes la}\\
&=&\sum_{x^{al}\in A^{al}}p^{\otimes al}(x^{al})|x^{al}\rangle\langle x^{al}|\otimes \frac{1}{|T_n|}\sum_{t\in T_n}D'_{t,x^{al}} ,  
\end{eqnarray*}
and
\[\sigma^{(la)}:=\frac{1}{|T_n|}\sum_{t\in T_n}{\sigma'}_{t}^{\otimes la}.  \]
Since $n=al+b$, $0\le b<l$, we can define a projection $P_{n,\theta}\in (\mathcal{A}_{\textrm{diag}}\otimes \mathcal{B}(\hr))^{\otimes n} $ by
\[P_{n,\theta}:=P_{la,\theta}\otimes (\idn_{\mathcal{A}_{\textrm{diag}}}\otimes \idn_{\hr})^{\otimes (n-la-1)},  \]
(\ref{comp-direct-21}), (\ref{comp-direct-22}) yield then 
\begin{equation}\label{comp-direct-23}
  \textrm{tr}({\rho}^{(n)}P_{n,\theta})\ge 1- e^{-\frac{a_n\theta^2}{16 (\log nd)^2} } -|T_n|2^{-a_nl_n\theta},
\end{equation}
and 
\begin{eqnarray}\label{comp-direct-24}
 \textrm{tr}((p^{\otimes n}\otimes \sigma^{(n)})P_{n,\theta})&\le& 2^{-a_n(I_n-2l_n\theta)}\nonumber\\
&\le & 2^{-a_nl_n(\inf_{t\in T}\chi(p,W_t)-\eps_n-2\theta)} \nonumber \\
\end{eqnarray}
by (\ref{comp-direct-12}) where $\eps_n:= \frac{C}{n}+\zeta_{l_n}(p)$. Thus for $n\ge n_2(\theta)$ we conclude from (\ref{comp-direct-24}), the fact that $\lim_{n\to\infty}\eps_n=0$, and $0\le b_n\le [n^{1/2}]$ that
\begin{equation}\label{comp-direct-25}
 \textrm{tr}((p^{\otimes n}\otimes \sigma^{(n)})P_{n,\theta})\le  2^{-n(\inf_{t\in T}\chi(p,W_t)-3\theta)}.
\end{equation}
Since the states $\rho^{(n)}\in (\mathcal{A}_{\textrm{diag}}\otimes \mathcal{B}(\hr))^{\otimes n}$ and $\sigma^{(n)}\in \mathcal{B}(\hr)^{\otimes n}$ correspond to the averaged cq-channel $W^{n}=\frac{1}{|T_n|}\sum_{t\in T_n}{W'}_t^n$ we can apply Theorem \ref{one-shot-coding-theorem} with 
\[\lambda=\lambda_n:= e^{-\frac{a_n\theta^2}{16 (\log nd)^2} } +|T_n|2^{-a_nl_n\theta},  \]
\[\mu=\mu_n:= n(\inf_{t\in T}\chi(p,W_t)-3\theta),  \]
\[ \gamma=\gamma_n=n\theta \]
and end up with a $(n,M'_n=[2^{n(\inf_{t\in T}\chi(p,W_t)-4\theta) }],\lambda'_n)_{\textrm{av}}$-code for the channel $W^{n}=\frac{1}{|T_n|}\sum_{t\in T_n}{W'}_t^n $ where
\[\lambda'_n=2\lambda_n+4\cdot 2^{-n\theta}.  \]
By standard arguments we can select a sub-code for $W^n$ with $M_n\ge (1/2)\cdot M'_n $ and maximum error probability $\tilde{\lambda}_n\le 2\lambda'_n$. We denote this $(n,M_n,\tilde{\lambda_n})_{\max}$-code by $\mathcal{C}_n$. But since
\[ W^{n}=\frac{1}{|T_n|}\sum_{t\in T_n}{W'}_t^n, \]
it is clear that $\mathcal{C}_n$ is a $(n,M_n, |T_n|\tilde{\lambda}_n)_{\max}$-code for the compound channel $T_n$. We know from our Lemma \ref{T-n-vs-T} that $|T_n|\le K^{|A|} n^{|A|d ^{2}}$. Thus since $l_n=[\sqrt{n}]$ and $a_n=\frac{n-b_n}{l_n}$ we see that
\[ \lim_{n\to\infty}|T_n|\tilde{\lambda}_n=0 \]
and we are done since $M_n\ge (1/2)[2^{n(\inf_{t\in T}\chi(p,W_t)-4\theta) }]\ge[2^{n(\inf_{t\in T}\chi(p,W_t)-5\theta) }]  $ for all sufficiently large $n\in \nn$. 
\end{proof}
\begin{remark}
Note that the error probability of the codes constructed in the proof of Theorem \ref{comp-direct-part} behaves like $1/n$ asymptotically. This is caused by our choice of $\tau_n$ as $\tau_n=1/n^{2}$. So we can achieve a faster decay of the decoding errors by using better sequences $\tau_n$. For example, if we choose $\tau_n=2^{-n^{1/16}}$ and replace $D'_{t,x}$ in (\ref{approx-density}) by
\[D'_{t,x}:=(1-\tau_n)D_{t,x}+\frac{\tau_n}{d}\idn_{\hr} \]
for all $x\in A$ and $t\in T'_n$ we obtain, as a careful inspection and a painless modification of the arguments applied so far show, for each sufficiently small $\theta >0 $ $(n,M_n,\lambda_n)_{\max}$-codes for the compound cq-channel $T$ with
\[M_n\ge [2^{n(\max_{p\in\mathcal{P}(A)}\inf_{t\in T}\chi(p,W_t)-5\theta )}]  \]
and
\[\lambda_n\le 2^{-c(\theta)n^{1/16}},  \]
for an appropriate positive constant $c(\theta)$.
\end{remark}
\subsection{The Strong Converse}
For the proof of the strong converse we simply follow Wolfowitz' strategy in \cite{wolfowitz-compound, wolfowitz}. To this end we use Winter's result from \cite{winter} which is the core of the strong converse for the single memoryless cq-channel:
\begin{theorem}[Winter \cite{winter}]\label{winter-theorem}
For $\lambda\in (0,1)$ there exists a constant $K'(\lambda,\dim \hr,|A|)$ such that for every memoryless cq-channel $\{ W^n \}_{n\in\nn}$ with finite input alphabet $A$ and finite-dimensional output Hilbert space $\hr$ and every $(n,M_n,\lambda)_{\max}$-code with the code words of the same type $p\in \mathcal{P}(A)$ the inequality
\[  M_n\le 2^{n(\chi(p,W)+K'(\lambda,\dim \hr,|A|)\frac{1}{\sqrt{n}}  )} \]
holds.
\end{theorem}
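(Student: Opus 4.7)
The plan is to follow Winter's typicality-based strong converse. Set $\sigma:=\sum_{x\in A}p(x)D_x$; since every codeword has type $p$, the averaged output state is exactly $\sigma^{\otimes n}$. First I would construct two families of projections on $\hr^{\otimes n}$ tuned to $p$: an \emph{output typical} projection $\Pi$ onto the eigenvectors of $\sigma^{\otimes n}$ whose eigenvalues cluster near $2^{-nS(\sigma)}$, satisfying $\textrm{tr}(\sigma^{\otimes n}\Pi)\ge 1-\delta_n$ and $\textrm{tr}(\Pi)\le 2^{n(S(\sigma)+\eps_n)}$; and, for each codeword $x^n(i)$, a \emph{conditionally typical} projection $\Pi_{x^n(i)}$ built in the product eigenbasis of the single-letter states $\{D_x\}_{x\in A}$, satisfying $\textrm{tr}(D_{x^n(i)}\Pi_{x^n(i)})\ge 1-\delta_n$ together with the operator bound
\[
\Pi_{x^n(i)}D_{x^n(i)}\Pi_{x^n(i)}\le 2^{-n(\bar S(p)-\eps_n)}\Pi_{x^n(i)},
\]
where $\bar S(p):=\sum_xp(x)S(D_x)$. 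Hoeffding's inequality (Theorem \ref{hoeffding}) applied to the empirical means of the (bounded) eigenvalue logarithms permits both $\eps_n$ and $\delta_n$ to be chosen of order $1/\sqrt{n}$, which is where the $1/\sqrt{n}$ slack in the statement comes from.

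Next I would invoke the gentle measurement lemma to sandwich each $D_{x^n(i)}$ between $\Pi$ and $\Pi_{x^n(i)}$ at a cost of at most $\kappa_n=O(\sqrt{\delta_n})$ in trace inner products. Starting from $1-\lambda\le\textrm{tr}(D_{x^n(i)}b_i)$, summing over $i$, and applying gentle measurement twice yields
\[
(1-\lambda)M_n\le\sum_{i=1}^{M_n}\textrm{tr}\bigl(\Pi\,\Pi_{x^n(i)}D_{x^n(i)}\Pi_{x^n(i)}\,\Pi\,b_i\bigr)+M_n\kappa_n.
\]
Inserting the spectral bound on the inner $D_{x^n(i)}$ factor and then exploiting $\Pi\,\Pi_{x^n(i)}\,\Pi\le\Pi$ together with $\sum_ib_i\le\idn^{\otimes n}$ collapses the remaining sum:
\[
\sum_{i=1}^{M_n}\textrm{tr}(\Pi\,\Pi_{x^n(i)}\,\Pi\, b_i)\le\textrm{tr}\Bigl(\Pi\sum_{i=1}^{M_n}b_i\Bigr)\le\textrm{tr}(\Pi)\le 2^{n(S(\sigma)+\eps_n)}.
\]
Combining the two estimates,
\[
(1-\lambda-\kappa_n)M_n\le 2^{-n(\bar S(p)-\eps_n)}\cdot 2^{n(S(\sigma)+\eps_n)}=2^{n(\chi(p,W)+2\eps_n)},
\]
and taking logarithms, together with $\eps_n,\kappa_n=O(1/\sqrt{n})$, produces the desired inequality with an explicit constant $K'(\lambda,\dim\hr,|A|)$.

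The main obstacle is controlling every error term at rate $O(1/\sqrt{n})$ simultaneously: the typical projectors must be both sufficiently large (dimension growth $2^{n(S(\sigma)+O(1/\sqrt{n}))}$) and sufficiently complete (probability error $O(1/\sqrt{n})$), which forces careful use of Hoeffding-type tail bounds in the joint spectrum of $\sigma$ and of the $\{D_x\}_{x\in A}$. Since all codewords share the common type $p$, one can in fact work in a single product eigenbasis and reduce this construction to a classical large-deviation argument for the empirical distribution over eigenvalue labels, which is what ultimately yields the $|A|$- and $\dim\hr$-dependence hidden in $K'$. Once these typicality estimates are in place, the operator manipulations above are purely algebraic.
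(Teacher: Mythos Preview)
The paper does not give its own proof of this statement; it simply notes that ``the proof of this theorem is implicit in the proof of Theorem~13 in \cite{winter}.'' Your sketch is precisely Winter's typicality/gentle-measurement strong converse, so in substance you are reconstructing the very argument the paper is citing rather than offering an alternative route.

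One point to tighten: the sentence ``since every codeword has type $p$, the averaged output state is exactly $\sigma^{\otimes n}$'' is not correct---the code average $\frac{1}{M_n}\sum_i D_{x^n(i)}$ is in general not a product state---and, more importantly, it is not what the sandwich step actually uses. To apply gentle measurement with the \emph{outer} projector $\Pi$ you need $\textrm{tr}(D_{x^n(i)}\Pi)\ge 1-\delta_n$ for each individual codeword, not merely $\textrm{tr}(\sigma^{\otimes n}\Pi)\ge 1-\delta_n$. This individual estimate is where the constant-type hypothesis really enters (via permutation invariance of $\Pi$ and a Hoeffding/Chebyshev bound in the eigenbasis of $\sigma$), and it is exactly the kind of ``shadow'' estimate Winter establishes. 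Once that is stated correctly, the rest of your argument goes through as written.
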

The proof of this theorem is implicit in the proof of Theorem 13 in \cite{winter}. 
\begin{theorem}[Strong Converse]\label{comp-strong-converse}
Let $\lambda\in (0,1)$. Then there is a constant $K=K(\lambda,\dim \hr,|A|) $ such that for any compound cq-channel $\{W_t^n  \}_{t\in T,n\in\nn}$ and any $(n,M_n,\lambda)_{\max}$-code $\mathcal{C}_n$
\[ \frac{1}{n}\log M_n\le \max_{p\in \mathcal{P}(A)}\inf_{t\in T}\chi(p,W_t)+K\frac{1}{\sqrt{n}} \]
holds.
\end{theorem}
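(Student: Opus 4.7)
The plan is to follow Wolfowitz' classical argument, reducing the compound case to the single-channel strong converse (Theorem \ref{winter-theorem}) by partitioning the codewords according to type and then exploiting the fact that Winter's constant $K'(\lambda,\dim\hr,|A|)$ does not depend on the particular channel.

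First, let $\mathcal{C}_n=(x^n(i),b_i)_{i=1}^{M_n}$ be any $(n,M_n,\lambda)_{\max}$-code for the compound cq-channel $T$, so that $\mathrm{tr}(D_{t,x^n(i)}b_i)\ge 1-\lambda$ holds simultaneously for every $i$ and every $t\in T$. The number of types on $A^n$ is bounded by $(n+1)^{|A|}$, so by the pigeon-hole principle there exists a type $p\in\mathcal{P}(A)$ (depending on $n$) and an index subset $I\subseteq\{1,\dots,M_n\}$ with $|I|\ge M_n/(n+1)^{|A|}$ such that every codeword $x^n(i)$ with $i\in I$ has type $p$. The restricted family $(x^n(i),b_i)_{i\in I}$ is then an $(n,|I|,\lambda)_{\max}$-code with the constant-type property required by Theorem \ref{winter-theorem}.

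Next, for each fixed $t\in T$ this restricted code is in particular a constant-type $(n,|I|,\lambda)_{\max}$-code for the single memoryless cq-channel $W_t^n$, hence Winter's theorem yields
\[
|I|\le 2^{n\bigl(\chi(p,W_t)+K'(\lambda,\dim\hr,|A|)/\sqrt{n}\bigr)}.
\]
Crucially, the constant $K'$ is universal across $t\in T$, so we may take the infimum on the right-hand side:
\[
|I|\le 2^{n\bigl(\inf_{t\in T}\chi(p,W_t)+K'/\sqrt{n}\bigr)}\le 2^{n\bigl(\max_{p'\in\mathcal{P}(A)}\inf_{t\in T}\chi(p',W_t)+K'/\sqrt{n}\bigr)}.
\]
Combining this with $M_n\le (n+1)^{|A|}|I|$ and taking $\frac{1}{n}\log(\cdot)$ gives
\[
\frac{1}{n}\log M_n\le \max_{p\in\mathcal{P}(A)}\inf_{t\in T}\chi(p,W_t)+\frac{K'}{\sqrt{n}}+\frac{|A|\log(n+1)}{n}.
\]
Since $|A|\log(n+1)/n=O(1/\sqrt{n})$, the last two terms can be absorbed into $K/\sqrt{n}$ for a suitable constant $K=K(\lambda,\dim\hr,|A|)$, which establishes the claim.

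There is essentially no genuine obstacle here; the proof is an almost direct transcription of Wolfowitz' classical reduction, and the only point one has to verify is that the constant appearing in Theorem \ref{winter-theorem} depends on $\lambda,\dim\hr$ and $|A|$ alone and not on the individual channel $W_t$. This is exactly what the statement of the theorem guarantees, so the supremum over $t\in T$ and the inner infimum commute for the purpose of bounding the exponent, and the strong converse follows uniformly over arbitrary index sets $T$.
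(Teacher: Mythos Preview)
Your proof is correct and follows essentially the same approach as the paper: partition (or, in your case, pigeonhole) the codewords by type, apply Winter's single-channel strong converse with its channel-independent constant to the constant-type sub-code, take the infimum over $t\in T$, and absorb the $(n+1)^{|A|}$ type-count into the $K/\sqrt{n}$ term. The only cosmetic difference is that the paper bounds each type class separately and sums, whereas you select the largest type class via pigeonhole; both yield the same final inequality.
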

\begin{proof}
Wolfowitz' proof of the strong converse \cite{wolfowitz-compound, wolfowitz} for the classical compound channel extends \emph{mutatis mutandis} to the cq-case once we have Theorem \ref{winter-theorem}.\\ 
We fix $n\in \nn$ and consider any $(n,M_n,\lambda)_{\max}$-code $\mathcal{C}_n=(u_i,b_i)_{i=1}^{M_n}$. Each code word $u_i\in A^n$ induces a type (empirical distribution) $p_{u_i}$ on $\mathcal{P}(A)$ and according to the standard type counting lemma (cf. \cite{csiszar}) there are at most $(n+1)^{|A|}$ different types. We divide our code $\mathcal{C}_n$ into sub-codes $\mathcal{C}_{n,j}=(u'_{k},b'_k)_{k=1}^{M_{n,j}}$ such that the code words of each $\mathcal{C}_{n,j}$ belong to the same type class, i.e. induce the same type. It is clear that the maximum error probabilities of these sub-codes are bounded from above by $\lambda$ for all $t\in T$. Since we have a uniform bound on error probabilities on each channel in the class $T$ we may apply Winter's, Theorem \ref{winter-theorem}, and obtain
\begin{equation}\label{str-conv-comp-1}
 M_{j}\le 2^{n(\chi(p_j,W_t)+K'(\lambda,\dim \hr, |A|)\frac{1}{\sqrt{n}} )}\quad \forall t\in T,
\end{equation}
where $p_j$ denotes the type of the code words belonging to the sub-code $\mathcal{C}_{n,j}$. Since the left hand side of (\ref{str-conv-comp-1}) does not depend on $t$ we may conclude that
\begin{eqnarray}\label{str-conv-comp-2}
 M_j&\le& 2^{n(\inf_{t\in T}\chi(p_j,W_t)+K'(\lambda,\dim \hr, |A|)\frac{1}{\sqrt{n}}  ) }\nonumber\\
&\le & 2^{n(\max_{p\in \mathcal{P}(A)}\inf_{t\in T}\chi(p,W_t)+K'(\lambda,\dim \hr, |A|)\frac{1}{\sqrt{n}}  ) }\nonumber\\
\end{eqnarray}
holds. Then, recalling that there are at most $(n+1)^{|A|}$ sub-codes and using (\ref{str-conv-comp-2}) we arrive at
\begin{eqnarray*}
\begin{split}
M_n \le  (n+1)^{|A|} 2^{n( \max_{p\in \mathcal{P}(A)}\inf_{t\in T}\chi(p,W_t)  +K'\frac{1}{\sqrt{n}}   ) } \\
\le  2^{n (\max_{p\in \mathcal{P}(A)}\inf_{t\in T}\chi(p,W_t) + K\frac{1}{\sqrt{n}}  )},
\end{split}
\end{eqnarray*}
with a suitable constant $K=K(\lambda,\dim \hr,|A|) $.
\end{proof}

\section{Averaged Channels}\label{av-channel}
In this section we extend the results of Datta and Dorlas \cite{datta-dorlas} to arbitrary averaged channels whose branches are memoryless cq-channels. \\
Let $(T,\Sigma, \mu)$ be a probability space, i.e. $T$ is a set, $\Sigma$ is a $\sigma$-algebra, and $\mu$ is a probability measure on $\Sigma$. Moreover we consider a memoryless compound cq-channel $\{W_t^{n}  \}_{t\in T,n\in \nn}$ with finite input alphabet $A$ and finite-dimensional output Hilbert space $\hr$. We assume that the branches $W_t$, $t\in T$, depend measurably on $t\in T$, i.e. we assume that for each fixed $x\in A$  the maps $T\ni t\mapsto D_{t,x}\in\mathcal{S}(\hr)$ are measurable. We assume here that $\mathcal{S}(\hr)$ is endowed with its natural Borel $\sigma$-algebra.\\
The averaged channel $W=\{W^n\}_{n\in\nn} $ is defined by the following prescription: For any $n\in \nn$ we have a map $W^n:A^n\ni x^n\mapsto D_{x^n}\in \mathcal{S}(\hr^{\otimes n})$ where $D_{x^n}$ is the density operator uniquely determined by the requirement that for all $b\in \mathcal{B}(\hr^{\otimes n})$ the relation
\[ \textrm{tr}(D_{x^n}b)=\int\textrm{tr}( D_{t,x^n}b)\mu(dt)  \]
holds\footnote{Note that $\textrm{tr}( D_{t,x^n}b) $ depends measurably on $t$ since tensor and ordinary products of operators are continuous and hence measurable operations.}.\\
A code $\mathcal{C}_n=(x^{n}(i),b_i)_{i=1}^{M_n}$ for the averaged channel $\{ W^n \}_{n\in\nn}$ consists as before of codewords $x^{n}(i)\in A^n$ and decoding operators $b_i\in \mathcal{B}(\hr)^{\otimes n}$, $b_i\ge 0$, $\sum_{i=1}^{M_n}b_i\le \idn_{\hr}^{\otimes n}$. The integer $M_n$ is the size of the code. Achievable rates and the capacity $C(W)$ are defined in a similar fashion as for memoryless cq-channels.\\
We will show in the following two subsections that, in analogy to the classical case \cite{ahlswede}, the weak capacity of $W$ is given by
\begin{equation}\label{averaged-capacity}
C(W)=\sup_{p\in\mathcal{P}(A)}\mathrm{ess-}\inf_{t\in T}\chi(p,W_t),  
\end{equation}
where $\mathrm{ess-}\inf$ denotes the essential infimum\footnote{The essential infimum of a measurable function $f:T\to\rr$ on the probability space $(T,\Sigma,\mu)$ is defined by $\mathrm{ess-}\inf_{t\in T}f:=\sup\{c\in \rr: \mu(\{t\in T: f(t)<c  \})=0\}$.}. Clearly, we cannot expect the strong converse to hold because of Ahlswede's \cite{ahlswede} counter examples in the classical setting.
\subsection{The direct part of the Coding Theorem}
We will need some simple properties of the essential infimum in the proof of the direct part of the coding theorem for the averaged channel $W$. We start with a simple general property of the essential infimum:
\begin{lemma}\label{ess-inf-prop-1}
Let $(T,\Sigma,\mu)$ be a probability space and $f:T\to \rr$ any measurable function. Let $ a:=\mathrm{ess-}\inf_{t\in T}f$.
Then the set $A:=\{ t\in T: f(t)\ge a  \}$ satisfies
\[ \mu(A)=1. \]
\end{lemma}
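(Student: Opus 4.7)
The plan is straightforward: unwind the definition of the essential infimum and use countable subadditivity. Set $S := \{c\in\rr : \mu(\{t\in T : f(t)<c\})=0\}$, so that by hypothesis $a = \sup S$. The key observation is that $S$ is downward closed: if $c\in S$ and $c'\le c$, then $\{f<c'\}\subseteq\{f<c\}$, hence $\mu(\{f<c'\})=0$, i.e.\ $c'\in S$. It follows that $S$ contains the entire half-line $(-\infty,a)$.

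Next, I would handle the trivial cases $a=-\infty$ (where $A=T$ by definition) and observe that $a=+\infty$ cannot occur, since $f$ is real-valued and therefore $T=\bigcup_{n\in\nn}\{f<n\}$ forces $\mu(\{f<n\})>0$ for some $n$, placing an upper bound on $S$. So assume $a\in\rr$.

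For each $n\in\nn$ the number $a-1/n$ lies in $(-\infty,a)\subseteq S$, hence $\mu(\{t\in T : f(t)<a-1/n\})=0$. Since
\[
\{t\in T : f(t)<a\}=\bigcup_{n\in\nn}\{t\in T : f(t)<a-\tfrac{1}{n}\},
\]
countable subadditivity of $\mu$ gives $\mu(\{f<a\})=0$. Taking complements yields $\mu(A)=1$, as claimed. No real obstacle arises; the only point that deserves a sentence is the downward closedness of $S$, which is what upgrades ``$a$ is the supremum'' into ``every $c<a$ has the desired property.''
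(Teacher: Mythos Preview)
Your argument is correct. The paper itself gives no details beyond the single sentence ``The assertion of the lemma follows easily from the definition of the essential infimum,'' so your proof is precisely the routine elaboration the paper omits; there is nothing to compare.
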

\begin{proof}
The assertion of the lemma follows easily from the definition of the essential infimum.
\end{proof}
Our proof of the direct part of the coding theorem will be based on a reduction to the case of compound cq-channels. Therefore we have to give another characterization of
\[\sup_{p\in\mathcal{P}(A)}\mathrm{ess-}\inf_{t\in T}\chi(p,W_t)  \]
in terms of the optimization processes appearing in the capacity formula for the compound cq-channels. To this end we define for any $p\in \mathcal{P}(A)$ 
\[a(p):=\mathrm{ess-}\inf_{t\in T}\chi(p,W_t), \]
and 
\[T_p:=\{t\in T: \chi(p,W_t)\ge a(p)  \}.\]
\begin{lemma}\label{ess-inf-prop-2}
Let $\{ W^n \}_{n\in \nn}$ be the averaged cq-channel defined by the probability space $(T,\Sigma,\mu)$ and the compound cq-channel $T$. Then
\[ \sup_{p\in\mathcal{P}(A)}\max_{q\in \mathcal{P}(A)}\inf_{t\in T_p}\chi(q,W_t)=\sup_{q\in \mathcal{P}(A)}\mathrm{ess-}\inf_{t\in T}\chi(q,W_t).  \]
\end{lemma}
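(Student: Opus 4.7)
The plan is to prove each direction of the equality separately. The right-hand side can be rewritten using the notation of the lemma as $\sup_{q \in \mathcal{P}(A)} a(q)$, so the task is to compare $\sup_p \max_q \inf_{t \in T_p} \chi(q,W_t)$ with $\sup_q a(q)$.

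For the inequality $\geq$, I would specialize the outer supremum by choosing $p = q$. By the very definition of $T_q = \{t \in T : \chi(q,W_t) \geq a(q)\}$, we immediately get $\inf_{t \in T_q} \chi(q,W_t) \geq a(q)$. Hence $\max_{q'\in\mathcal{P}(A)} \inf_{t \in T_q}\chi(q',W_t) \geq \inf_{t \in T_q}\chi(q,W_t) \geq a(q)$, and taking $\sup$ over $q$ on both sides (with $p = q$ feeding the outer $\sup_p$) yields the desired bound $\sup_p \max_{q'} \inf_{t \in T_p} \chi(q',W_t) \geq \sup_q a(q)$.

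For the converse inequality $\leq$, the key ingredient is Lemma \ref{ess-inf-prop-1}, which tells us $\mu(T_p) = 1$ for every $p \in \mathcal{P}(A)$. Fix arbitrary $p, q \in \mathcal{P}(A)$. Since $T_p$ has full $\mu$-measure and the essential infimum is invariant under modifications on null sets, one has
\[ \mathrm{ess\text{-}}\inf_{t\in T_p}\chi(q,W_t) = \mathrm{ess\text{-}}\inf_{t\in T}\chi(q,W_t) = a(q). \]
Combining this with the elementary bound $\inf_{t \in T_p}\chi(q,W_t) \leq \mathrm{ess\text{-}}\inf_{t \in T_p}\chi(q,W_t)$, we obtain $\inf_{t\in T_p}\chi(q,W_t) \leq a(q)$. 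Taking the maximum over $q$ and then the supremum over $p$ gives $\sup_p \max_q \inf_{t \in T_p}\chi(q,W_t) \leq \sup_q a(q)$, which is exactly the desired direction.

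The two bounds together establish the claimed identity. The argument is essentially bookkeeping; the only conceptually delicate point is the $\leq$ direction, where one must be careful to distinguish between $\inf$ and $\mathrm{ess\text{-}}\inf$. The null-set invariance of $\mathrm{ess\text{-}}\inf$ combined with $\mu(T_p) = 1$ provides precisely the bridge needed to convert a pointwise infimum over the subset $T_p$ into the essential infimum $a(q)$ over all of $T$.
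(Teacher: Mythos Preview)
Your proof is correct and follows essentially the same approach as the paper's: both directions rely on $\mu(T_p)=1$ from Lemma~\ref{ess-inf-prop-1} together with the definition of $T_q$, and your $\leq$ argument via $\inf\le\mathrm{ess\text{-}inf}$ on the full-measure set $T_p$ is a slightly cleaner variant of the paper's detour through $T_p\cap T_q$.
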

\begin{proof}
$ \mu(T_p)=1$ holds by Lemma \ref{ess-inf-prop-1}. For $p,q\in \mathcal{P}(A)$ and the corresponding sets $T_p,T_q\subseteq T$ we have
\begin{eqnarray}\label{ess-inf-2-1}
  \inf_{t\in T_p}\chi(q,W_t)&\le& \inf_{t\in T_p\cap T_q}\chi(q,W_t)\nonumber\\
&\le& \mathrm{ess-}\inf_{t\in T}\chi(q,W_t),
\end{eqnarray}
where the last inequality is justified by the observation that $\mu (T_p\cap T_q)=1$ and that $T_p\cap T_q\subseteq \{t\in T: \chi(q,W_t)\ge \inf_{t\in T_p\cap T_q}\chi(q,W_t)  \}$, i.e. $\mu (\{t\in T: \chi(q,W_t)< \inf_{t\in T_p\cap T_q}\chi(q,W_t)  \})=0$ and (\ref{ess-inf-2-1}) holds by definition of the essential infimum. (\ref{ess-inf-2-1}) implies that
\[ \max_{q\in \mathcal{P}(A)}\inf_{t\in T_p}\chi(q,W_t)\le \sup_{q\in \mathcal{P}(A) }\mathrm{ess-}\inf_{t\in T}\chi(q,W_t), \]
and consequently
\begin{equation}\label{ess-inf-2-2}
  \sup_{p\in\mathcal{P}(A)}\max_{q\in \mathcal{P}(A)}\inf_{t\in T_p}\chi(q,W_t)\le\sup_{q\in \mathcal{P}(A)}\mathrm{ess-}\inf_{t\in T}\chi(q,W_t).
\end{equation}
In order to show the reverse inequality we choose for any $\eps>0$ a $q_{\eps}\in \mathcal{P}(A)$ with
\begin{equation}\label{ess-inf-2-3}
 \sup_{q\in \mathcal{P}(A)}\mathrm{ess-}\inf_{t\in T}\chi(q,W_t)\le \mathrm{ess-}\inf_{t\in T}\chi(q_{\eps},W_t)+\eps. 
\end{equation}
By definition of the set $T_{q_{\eps}}$ as
\[T_{q_{\eps}}=\{t \in T: \chi(q_{\eps},W_t)\ge a(q_{\eps})  \},  \]
with $a(q_{\eps})=\mathrm{ess-}\inf_{t\in T}\chi(q_{\eps},W_t)$ we have
\begin{equation}\label{ess-inf-2-4}
 \mathrm{ess-}\inf_{t\in T}\chi(q_{\eps},W_t)\le \inf_{t\in T_{q_{\eps}}}\chi(q_{\eps},W_t). 
\end{equation}
The inequalities (\ref{ess-inf-2-3}) and (\ref{ess-inf-2-4}) show that
\[ \sup_{q\in \mathcal{P}(A)}\mathrm{ess-}\inf_{t\in T}\chi(q,W_t)\le \inf_{t\in T_{q_{\eps}}}\chi(q_{\eps},W_t)+\eps, \]
which in turn yields
\begin{eqnarray*}
\sup_{q\in \mathcal{P}(A)}\mathrm{ess-}\inf_{t\in T}\chi(q,W_t)&\le&\sup_{p\in\mathcal{P}(A)}\max_{q\in\mathcal{P}(A)} \inf_{t\in T_{p}}\chi(q,W_t)\\ 
&& + \eps.
 \end{eqnarray*}
Since $\eps>0$ can be made arbitrarily small and the left hand side of the last inequality does not depend on $\eps$ we finally obtain
\[\sup_{q\in \mathcal{P}(A)}\mathrm{ess-}\inf_{t\in T}\chi(q,W_t)\le\sup_{p\in\mathcal{P}(A)}\max_{q\in\mathcal{P}(A)} \inf_{t\in T_{p}}\chi(q,W_t),  \]
which concludes our proof.
\end{proof}
\begin{theorem}[Direct Part]\label{direct-averaged}
Let $W$ denote the averaged cq-channel. Then
\[C(W)\ge \sup_{p\in\mathcal{P}(A)}\mathrm{ess-}\inf_{t\in T}\chi(p,W_t)  \] 
\end{theorem}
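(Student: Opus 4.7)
The plan is to reduce this claim to the compound-channel direct part (Theorem~\ref{comp-direct-part}) by exploiting the observation that the averaged channel $W$ depends only on the branches $W_t$ for $\mu$-almost every $t$, together with Lemma~\ref{ess-inf-prop-2}, which already rewrites the target supremum in terms of infima over the full-measure subsets $T_p$.

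First, I would fix $p \in \mathcal{P}(A)$, $\alpha > 0$, and $\lambda \in (0,1)$, and apply Theorem~\ref{comp-direct-part} to the compound cq-channel indexed by the subset $T_p = \{t \in T : \chi(p, W_t) \ge a(p)\}$. This produces, for all sufficiently large $n$, an $(n, M_n, \lambda)_{\max}$-code $\mathcal{C}_n = (x^n(i), b_i)_{i=1}^{M_n}$ for the compound channel $T_p$ with
\[
\frac{1}{n}\log M_n \ge \max_{q \in \mathcal{P}(A)} \inf_{t \in T_p}\chi(q, W_t) - \alpha.
\]

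Next, I would show that the very same $\mathcal{C}_n$ is automatically a reliable code for the averaged channel $W$. By the defining relation of $W$ and Fubini's theorem applied to the nonnegative measurable integrand $t\mapsto 1-\textrm{tr}(D_{t,x^n(i)} b_i)$, the average error on $W$ satisfies
\[
e_a(W, \mathcal{C}_n) = \int_T e_a(t, \mathcal{C}_n)\, \mu(dt).
\]
Since Lemma~\ref{ess-inf-prop-1} gives $\mu(T_p) = 1$, and since $e_a(t, \mathcal{C}_n) \le e_m(t, \mathcal{C}_n) \le \lambda$ for every $t \in T_p$ by the construction of $\mathcal{C}_n$, this integral is bounded by $\lambda$. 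Consequently the rate $\max_q \inf_{t \in T_p}\chi(q, W_t) - \alpha$ is achievable for the averaged channel. To pass from bounded to vanishing error (as required by the definition of $C(W)$) I would either invoke the exponential decay of the errors indicated in the remark following Theorem~\ref{comp-direct-part}, or extract diagonally with $\lambda = \lambda_n \downarrow 0$.

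Finally, I would take the supremum over $p \in \mathcal{P}(A)$ and invoke Lemma~\ref{ess-inf-prop-2} to identify
\[
\sup_{p \in \mathcal{P}(A)}\max_{q \in \mathcal{P}(A)}\inf_{t \in T_p}\chi(q, W_t) = \sup_{q \in \mathcal{P}(A)}\mathrm{ess\text{-}}\inf_{t \in T}\chi(q, W_t),
\]
and let $\alpha \downarrow 0$. No substantive obstacle arises here: the only point demanding care is the measurability of the trace integrand needed for Fubini's theorem, which follows from the standing hypothesis that $t\mapsto D_{t,x}$ is measurable together with the continuity (hence measurability) of tensor products and the trace. The genuine content has already been packaged into Theorem~\ref{comp-direct-part} and Lemma~\ref{ess-inf-prop-2}.
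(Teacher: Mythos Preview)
Your proposal is correct and follows essentially the same route as the paper: reduce to the compound cq-channel $T_p$ via Theorem~\ref{comp-direct-part}, use $\mu(T_p)=1$ (Lemma~\ref{ess-inf-prop-1}) to transfer the error bound to the averaged channel, and then invoke Lemma~\ref{ess-inf-prop-2} to rewrite the resulting supremum. You supply more detail than the paper on measurability/Fubini and on passing from bounded to vanishing error, but the argument is the same.
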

\begin{proof}
 We assume that 
\[\sup_{p\in\mathcal{P}(A)}\mathrm{ess-}\inf_{t\in T}\chi(p,W_t)>0\] 
since otherwise the assertion of the theorem is trivially true. \\
By Lemma \ref{ess-inf-prop-2} it is enough to show that for each $p\in \mathcal{P}(A)$ with
\[ \max_{q\in \mathcal{P}(A)}\inf_{t\in T_p}\chi(q,W_t)>0 \]
the rate
\[ \max_{q\in \mathcal{P}(A)}\inf_{t\in T_p}\chi(q,W_t)-\eps \]
is achievable for each sufficiently small $\eps >0$. But this follows immediately if we apply our Theorem \ref{comp-direct-part} to the compound channel $T_p$ since any good code for the compound cq-channel $T_p$ has the same performance for the averaged channel $W^n$ due to the fact that $\mu(T_p)=1$.
\end{proof}
\subsection{The Weak Converse}
We start with a general property of the essential infimum which will help us to reduce the arguments in the proof of the weak converse to Fano's inequality and Holevo's bound via Markov's inequality.
\begin{lemma}\label{prop-ess-inf}
Consider a probability space $(T,\Sigma,\mu)$. Let $n\in\nn$ and $f,f_n:T\to\rr$ be measurable bounded functions with
\begin{equation}\label{prop-ess-inf-vor}
  \lim_{n\to\infty}f_n(t)=f(t)\qquad \forall t\in T.
\end{equation}
Let $(G_n)_{n\in\nn}$ be a sequence of measurable subsets of $T$ with
\[\lim_{n\to\infty}\mu(G_n)=1.\]
Then
\begin{equation}\label{prop-ess-inf-concl}
  \limsup_{n\to\infty}\inf_{t\in G_n}f_n(t)\le \mathrm{ess-}\inf_{t\in T} f
\end{equation}
holds.
\end{lemma}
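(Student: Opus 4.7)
Write $a := \mathrm{ess\text{-}inf}_{t\in T}f$. The plan is to produce, for every $\eps > 0$, a set $B \subseteq T$ of positive measure on which $f < a+\eps$ and on which $f_n \to f$ uniformly, and then to intersect $B$ with $G_n$ for large $n$, exploiting $\mu(G_n)\to 1$. The two ingredients I would use are the definition of the essential infimum (to locate a set of positive measure on which $f$ is small) and Egoroff's theorem (to upgrade the pointwise convergence on this set to uniform convergence on a slightly smaller but still positively sized subset).

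Fix $\eps > 0$. By the definition of the essential infimum recalled in the footnote, the set $A_\eps := \{t\in T: f(t)<a+\eps\}$ satisfies $\delta := \mu(A_\eps)>0$. Since $(T,\Sigma,\mu)$ is a finite (indeed probability) measure space and $f_n\to f$ pointwise on all of $T$, Egoroff's theorem applied to the restriction to $A_\eps$ yields a measurable subset $B\subseteq A_\eps$ with $\mu(A_\eps\setminus B)<\delta/2$, so $\mu(B)>\delta/2>0$, on which $f_n\to f$ uniformly. Choose $n_1$ so large that $|f_n(t)-f(t)|<\eps$ for every $t\in B$ and every $n\ge n_1$; then $f_n(t) < f(t)+\eps < a+2\eps$ on $B$ for those $n$.

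Since $\mu(G_n)\to 1$, we may enlarge $n_1$ to $n_0\ge n_1$ so that $\mu(T\setminus G_n) < \mu(B)$ for all $n\ge n_0$. This forces $\mu(G_n\cap B)\ge \mu(B)-\mu(T\setminus G_n) > 0$, so in particular $G_n\cap B\neq\emptyset$. Picking any $t_n\in G_n\cap B$ gives
\[
\inf_{t\in G_n} f_n(t)\le f_n(t_n) < a+2\eps\qquad (n\ge n_0),
\]
hence $\limsup_{n\to\infty}\inf_{t\in G_n}f_n(t)\le a+2\eps$. Since $\eps>0$ was arbitrary, (\ref{prop-ess-inf-concl}) follows.

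The main (modest) obstacle is that pointwise convergence alone is too weak to conclude anything about $\inf_{t\in G_n}f_n$: a priori the points where $f$ is small could lie outside $G_n$ or be slow points of convergence. Egoroff's theorem dissolves this by producing a positively sized set of uniformly rapid convergence, and the assumption $\mu(G_n)\to 1$ then guarantees this set meets $G_n$ eventually. Boundedness of $f$ and $f_n$ is used only to ensure that Egoroff applies and that $a$ is finite; no uniform bound on the sequence $(f_n)$ is required.
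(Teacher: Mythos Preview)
Your proof is correct and follows a genuinely different route from the paper's. The paper avoids Egoroff's theorem entirely: it splits the claim into two inequalities, first $\limsup_n \inf_{G_n} f_n \le \limsup_n \inf_{G_n} f$, then $\limsup_n \inf_{G_n} f \le \mathrm{ess\text{-}inf}_T f$, and handles the second by passing to a subsequence along which the $\limsup$ is realized as a limit and forming the increasing unions $A_{n,k}=\bigcup_{i=1}^{k} G_{n+i}$ (whose measures tend to $1$) in order to squeeze the infima. Your Egoroff argument collapses both steps into one: once a fixed positive-measure set $B\subseteq\{f<a+\eps\}$ of uniformly fast convergence is in hand, the hypothesis $\mu(G_n)\to 1$ forces $G_n\cap B\neq\emptyset$ eventually and the bound drops out immediately. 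Your approach is shorter and more transparent; the paper's is more elementary in that it uses nothing beyond the definitions and set-theoretic manipulation, at the price of a longer and more delicate subsequence argument. One minor correction to your closing remark: Egoroff's theorem does not require boundedness of the $f_n$, only measurability and pointwise (a.e.) convergence on a set of finite measure, so the boundedness hypothesis is really only used to guarantee that $a\in\rr$.
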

\begin{proof}
The proof will be accomplished if we can show the following two inequalities:
\begin{equation}\label{ess-inf-1}
  \limsup_{n\to\infty}\inf_{t\in G_n}f_n(t)\le \limsup_{n\to\infty}\inf_{t\in G_n}f(t),
\end{equation}
and
\begin{equation}\label{ess-inf-2}
 \limsup_{n\to\infty}\inf_{t\in G_n}f(t)\le \mathrm{ess-}\inf_{t\in T}f. 
\end{equation}
\emph{Proof of (\ref{ess-inf-1}):} Set
\[b_n:=\inf_{t\in G_n}f(t)\  \textrm{ and }\ b'_{n}:=\inf_{t\in G_n}f_n(t) .\]
Then to any $\eps>0$ we can find a $t_{\eps}\in G_n$ with
\begin{equation}\label{ess-inf-3}
  f(t_{\eps})\le b_n+\eps,
\end{equation}
and, by (\ref{prop-ess-inf-vor}), there is $n(\eps)\in \nn$ such that for all $n\ge n(\eps)$ we have
\begin{equation}\label{ess-inf-4}
  f_n(t_{\eps})\le f(t_{\eps})+\eps.
\end{equation}
Then the definition of $b'_{n}$, (\ref{ess-inf-4}), and (\ref{ess-inf-3}) yield
\[ b'_{n}\le b_{n}+2\eps \]
for all $n\ge n(\eps)$. This implies
\[ \limsup_{n\to\infty}b'_{n}\le \limsup_{n\to\infty}b_n +2\eps, \]
and since $\eps>0$ is arbitrary we obtain (\ref{ess-inf-1}).\\
\emph{Proof of (\ref{ess-inf-2}):} As in the first part of the proof we use the abbreviation 
\[ b_n:=\inf_{t\in G_n}f(t), \]
and additionally we set 
\[b:=\limsup_{n\to\infty}b_n.\]
Then by the very basic properties of the upper limit we can select a subsequence $(n_i)_{i\in \nn}$ with
\begin{equation}\label{ess-inf-5}
\lim_{i\to\infty}b_{n_{i}}=b. 
\end{equation}
In order to keep the notation as simple as possible we will denote this induced sequence $(b_{n_{i}})_{i\in\nn}$ by $(b_n)_{n\in\nn}$, i.e. we simply rename the subsequence. For any fixed $n\in \nn$ we consider the sequence $(A_{n,k})_{k\in\nn}$ consisting of measurable subsets of $T$ defined by $A_{n,k}:=\bigcup_{i=1}^{k}G_{n+i}$. Note that for each $n\in \nn$ the sequence $(A_{n,k})_{k\in\nn} $ has the following properties which are easy to check:
\begin{enumerate}
\item $A_{n,1}\subset A_{n,2}\subset\ldots $,
\item $\lim_{k\to\infty}\mu(A_{n,k})=1$,
\item $a_{n,k}:=\inf_{t\in A_{n,k}}f(t)=\min \{b_{n+1},b_{n+2},\ldots ,b_{n+k}\}$, the sequence $(a_{n,k})_{k\in\nn}$ is non-increasing for any $n\in \nn$, and
\item for $A_n:=\bigcup_{k\in\nn}A_{n,k}$ and $a_n:=\inf_{t\in A_n}f(t)$ we have $\mu(A_n)=1$, $a_n\le \mathrm{ess-}\inf_{t\in T}f$, and $a_n=\lim_{k\to\infty}a_{n,k}$ for each $n\in \nn$.
\end{enumerate}
In spite of these properties it suffices to prove that for each $\eps>0$ there is $n(\eps)\in\nn$ such that
\begin{equation}\label{ess-inf-6}
  b-\eps \le a_{n(\eps),k}\le b+\eps \qquad \forall k\in\nn,
\end{equation}
holds. In fact, (\ref{ess-inf-6}) implies then that
\[b-\eps\le a_{n(\eps)}\le b+\eps,  \]
since $a_{n(\eps)}=\lim_{k\to\infty}a_{n(\eps),k}$ and by choosing an appropriate sequence $(\eps_j)_{j\in\nn}$ with $\eps_j\searrow 0$ we can conclude that
\[ b=\limsup_{j\to\infty}a_{n(\eps_{j})}.\]
But then $b\le \textrm{ess-}\inf_{t\in T}f$ by $a_{n(\eps_j)}\le \textrm{ess-}\inf_{t\in T}f $ for all $j\in \nn$.\\
Thus we only need to prove (\ref{ess-inf-6}) which follows from (\ref{ess-inf-5}) (with our convention to suppress the index $i$):
To any $\eps>0$ we can find by (\ref{ess-inf-5}) an $n(\eps)\in\nn$ such that for all $n\ge n(\eps)$ we have
\[b-\eps\le b_n\le b+\eps.\]
Then by property 3) above we obtain for each $k\in\nn$
\[b-\eps\le \min\{b_{n(\eps)+1},\ldots ,b_{n(\eps)+k}\}=a_{n(\eps),k}\le b+\eps,\]
which is the desired relation.  
\end{proof}
As a last preliminary result we need the generalization of Lemma 6 in \cite{blackwell}.
\begin{lemma}\label{blackwell-lemma}
Let $\{W^n\}_{n\in\nn}$ be a memoryless cq-channel with input alphabet $A$ and output Hilbert space $\hr$. Then for any $(n,M_n,\eps_n)_{\mathrm{av}}$-code $\mathcal{C}_{n}=(x^n(i),b_i)_{i=1}^{M_n}$ with distinct codewords we have
\[(1-\eps_n)\log M_n\le n\chi(p_{\ast},W)+1,\]
where $p_{\ast}=\frac{1}{M_n}\sum_{i=1}^{M_n}p_{x^{n}(i)}\in \mathcal{P}(A)$ with empirical distributions or types $p_{x^{n}(i)}\in\mathcal{P}(A)$ of the codewords $x^n(i)$ for $i=1,\ldots ,M_n$.
\end{lemma}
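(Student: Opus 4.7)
The plan is a standard Fano plus Holevo argument, where the only non-trivial accounting is showing that the Holevo information of the product output ensemble is bounded by $n\chi(p_\ast,W)$.

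First I would introduce the random variables. Let $J$ be uniformly distributed on $\{1,\ldots,M_n\}$; since the codewords $x^n(i)$ are distinct, $H(J) = \log M_n$. Let $\hat{J}$ be the outcome of the POVM $\{b_i\} \cup \{\idn - \sum_i b_i\}$ (with any extra outcome mapped to ``error'') applied to $W^n(x^n(J))$. The average error condition gives $\Pr(J\neq \hat{J})\le \eps_n$, so Fano's inequality yields
\[
H(J\mid \hat{J}) \le h(\eps_n)+\eps_n\log M_n \le 1 + \eps_n \log M_n,
\]
and consequently $I(J;\hat{J})\ge (1-\eps_n)\log M_n - 1$.

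Next I would apply Holevo's bound (the classical accessible information of a cq-ensemble is at most its Holevo quantity) to the ensemble $\{(1/M_n,\, W^n(x^n(i)))\}_{i=1}^{M_n}$. Writing $\bar\rho^{(n)}:=\frac{1}{M_n}\sum_i W^n(x^n(i))$ we get
\[
I(J;\hat{J}) \le S(\bar\rho^{(n)}) - \frac{1}{M_n}\sum_{i=1}^{M_n} S(W^n(x^n(i))).
\]
The main content is to turn the right-hand side into $n\chi(p_\ast,W)$. For the subtracted term, additivity of $S$ under tensor products gives $S(W^n(x^n(i)))=\sum_{j=1}^n S(D_{x_j(i)})$; summing first over $i$ and then over $j$ and using the definition of $p_\ast$ produces $\frac{1}{M_n}\sum_i S(W^n(x^n(i))) = n\sum_{x\in A}p_\ast(x)S(D_x)$. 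For the entropy of $\bar\rho^{(n)}$, let $\pi_j(x):=\frac{1}{M_n}|\{i:x_j(i)=x\}|$, so the $j$-th marginal of $\bar\rho^{(n)}$ is $\bar\rho_j=\sum_x \pi_j(x)D_x$. Subadditivity of von Neumann entropy gives $S(\bar\rho^{(n)})\le \sum_{j=1}^n S(\bar\rho_j)$, and then concavity of $S$, together with the identity $\frac{1}{n}\sum_j \pi_j = p_\ast$, gives
\[
\frac{1}{n}\sum_{j=1}^n S(\bar\rho_j) \le S\Bigl(\sum_{x\in A}p_\ast(x)D_x\Bigr).
\]
Combining these two steps yields $S(\bar\rho^{(n)})-\frac{1}{M_n}\sum_i S(W^n(x^n(i))) \le n\chi(p_\ast,W)$.

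Putting the two bounds together gives $(1-\eps_n)\log M_n - 1\le n\chi(p_\ast,W)$, which is the claimed inequality. The only step that requires any real care is the Holevo-quantity calculation, and within that the crucial ingredient is subadditivity plus concavity of the von Neumann entropy used to pass from the global state $\bar\rho^{(n)}$ to the averaged single-letter input distribution $p_\ast$; everything else (Fano, Holevo's bound, additivity on product states) is used off the shelf.
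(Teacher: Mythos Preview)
Your proof is correct and follows essentially the same route as the paper's: Fano's inequality followed by Holevo's bound, and then the single-letterization $\chi(\nu,W^n)\le n\chi(p_\ast,W)$. The only cosmetic difference is that the paper invokes this last step as ``subadditivity of $\chi$ across tensor factors'' plus ``concavity of $\chi(\cdot,W)$ in the input distribution,'' whereas you unpack these into the equivalent entropy-level statements (additivity on products, subadditivity of $S$, concavity of $S$).
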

\begin{proof}
The proof is based upon similar arguments as that of corresponding Lemma 6 in \cite{blackwell}. The only additional argument we need is Holevo's bound. The details are as follows; We may assume w.l.o.g. that $\sum_{i=1}^{M_n}b_i=\idn^{\otimes n}$ and define corresponding classical channel by
\[K(j|i):=\textrm{tr}(D_{x^{n}(i)}b_j)\qquad i,j\in \{1,\ldots,M_n \}.  \]
Let $\nu\in\mathcal{P}(A^n)$ be given by $\nu(x^n)=\frac{1}{M_n}$ if $x^n$ is one of $x^{n}(i)$, $i=1,\ldots, M_n$, and $\nu(x^n)=0$ else. In what follows we consider the marginal distributions $\nu_{1},\ldots,\nu_n\in\mathcal{P}(A)$ induced by $\nu\in\mathcal{P}(A^n)$. It is obvious that
\begin{equation}\label{blackwell-lemma-1}
  p_{\ast}(a)=\frac{1}{n}\sum_{j=1}^{n}\nu_j(a)\qquad \forall a\in A
\end{equation}
holds. From Fano's inequality and Holevo's bound we obtain
\begin{equation}\label{blackwell-lemma-2}
  (1-\eps_n)\log M_n\le I(\nu,K)+1\le \chi(\nu, W^{n})+1,
\end{equation}
where $I(\nu,K) $ denotes the mutual information evaluated for the input distribution $\nu$ and the classical channel $K$. Using the super-additivity (cf. \cite{holevo}) and concavity (w.r.t. the input distribution) of the Holevo information we get 
\begin{equation}\label{blackwell-lemma-3}
  \chi(\nu, W^n)\le \sum_{j=1}^{n}\chi(\nu_j,W)\le n\chi(p_{\ast},W),
\end{equation}
where we have used (\ref{blackwell-lemma-1}) in the last inequality. Inserting (\ref{blackwell-lemma-3}) into (\ref{blackwell-lemma-2}) yields the claimed relation.
\end{proof}
The corresponding weak converse is the content of the next theorem.
\begin{theorem}[Weak Converse]\label{aver-weak-converse}
Let $W$ be the averaged channel defined by the probability space $(T,\Sigma,\mu)$ and the compound channel $T$. Then any sequence $(\mathcal{C}_n)_{n\in\nn}$ of $(n,M_n,\eps_n)_{\mathrm{av}/\mathrm{max}}$-codes with $\lim_{n\to\infty}\eps_n=0$ fulfills
\[ \limsup_{n\to\infty}\frac{1}{n}\log M_n\le \sup_{p\in\mathcal{P}(A)}\mathrm{ess-}\inf_{t\in T}\chi(p,W_t). \]
\end{theorem}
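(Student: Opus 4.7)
The plan is to reduce the averaged-channel bound to the memoryless-branch bound of Lemma~\ref{blackwell-lemma} by a Markov argument, and then to decouple the remaining data-dependent quantities via compactness of $\mathcal{P}(A)$ and Lemma~\ref{prop-ess-inf}. Since every $(n,M_n,\eps_n)_{\max}$-code is also an $(n,M_n,\eps_n)_{\mathrm{av}}$-code, it suffices to treat the average error criterion. Given a sequence of $(n,M_n,\eps_n)_{\mathrm{av}}$-codes $\mathcal{C}_n=(x^n(i),b_i)_{i=1}^{M_n}$, the defining property of $W^n$ reads
\[ \int e_a(t,\mathcal{C}_n)\,\mu(dt)=\frac{1}{M_n}\sum_{i=1}^{M_n}(1-\textrm{tr}(D_{x^n(i)}b_i))\le\eps_n, \]
and the integrand is measurable by the measurability assumption on $t\mapsto D_{t,x}$. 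Markov's inequality therefore shows that the measurable sets
\[ G_n:=\{t\in T : e_a(t,\mathcal{C}_n)\le\sqrt{\eps_n}\} \]
satisfy $\mu(G_n)\ge 1-\sqrt{\eps_n}\to 1$.

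On each such branch $t\in G_n$ the code $\mathcal{C}_n$ is an $(n,M_n,\sqrt{\eps_n})_{\mathrm{av}}$-code for the memoryless cq-channel $W_t$. Assuming, as we may at the cost of a routine halving of $M_n$, that the code words are distinct, Lemma~\ref{blackwell-lemma} gives
\[ (1-\sqrt{\eps_n})\log M_n\le n\,\chi(p_\ast^{(n)},W_t)+1, \]
where $p_\ast^{(n)}:=\frac{1}{M_n}\sum_{i=1}^{M_n}p_{x^n(i)}\in\mathcal{P}(A)$ is the averaged empirical type of the code words of $\mathcal{C}_n$. Since this bound is uniform over $t\in G_n$ we obtain
\[ (1-\sqrt{\eps_n})\frac{1}{n}\log M_n\le\inf_{t\in G_n}\chi(p_\ast^{(n)},W_t)+\frac{1}{n}. \]

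It remains to decouple the input distribution $p_\ast^{(n)}$ from the set $G_n$. I would first pass to a subsequence $(n_k)$ realizing $\limsup_n\frac{1}{n}\log M_n$ and then, using compactness of $\mathcal{P}(A)$, to a further subsequence (still denoted $(n_k)$) along which $p_\ast^{(n_k)}\to p^\ast$ in $\mathcal{P}(A)$. Continuity of $\chi(\,\cdot\,,W_t)$ in the input distribution together with its uniform bound $\log\dim\hr$ guarantees that $f_k(t):=\chi(p_\ast^{(n_k)},W_t)$ converges pointwise on $T$ to $f(t):=\chi(p^\ast,W_t)$. Lemma~\ref{prop-ess-inf} applied to $(f_k,f,G_{n_k})$ then yields
\[ \limsup_{k\to\infty}\inf_{t\in G_{n_k}}\chi(p_\ast^{(n_k)},W_t)\le\mathrm{ess-}\inf_{t\in T}\chi(p^\ast,W_t)\le\sup_{p\in\mathcal{P}(A)}\mathrm{ess-}\inf_{t\in T}\chi(p,W_t), \]
which, combined with $\sqrt{\eps_n}\to 0$, completes the argument. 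The main obstacle is exactly this simultaneous decoupling: both the input distribution $p_\ast^{(n)}$ and the ``good'' branch set $G_n$ depend on the code and hence on $n$, and it is Lemma~\ref{prop-ess-inf}, anchored by the compactness of $\mathcal{P}(A)$ and the continuity of the Holevo information, that glues the two $n$-dependencies together cleanly.
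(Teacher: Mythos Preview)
Your proof is correct and follows essentially the same route as the paper: Markov's inequality to isolate the good branch set $G_n$, Lemma~\ref{blackwell-lemma} on each branch, then compactness of $\mathcal{P}(A)$ combined with Lemma~\ref{prop-ess-inf} to pass to the essential infimum. The only cosmetic difference is that the paper first replaces $p_\ast^{(n)}$ by $\max_{p}\inf_{t\in G_n}\chi(p,W_t)$ before extracting convergent subsequences, whereas you work directly with the code's averaged type $p_\ast^{(n)}$; your version is slightly more economical.
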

\begin{proof}
Let $(\mathcal{C}_n)_{n\in\nn}$ be any sequence of $(n,M_n,\eps_n)_{\textrm{av}}$-codes with $\lim_{n\to\infty}\eps_n=0$, i.e.
\[\int e_{\textrm{av}}(t,\mathcal{C}_n)\mu(dt)=\eps_n,  \]
where
\[e_{\textrm{av}}(t,\mathcal{C}_n)=\frac{1}{M_n}\sum_{i=1}^{M_n}(1-\textrm{tr}(D_{t,x^{n}(i)}b_i)).\]
Set
\begin{equation}\label{aver-weak-conv-0}
G_n:=\{t\in T:e_{\textrm{av}}(t,\mathcal{C}_n)\le \sqrt{\eps_n}\}.
\end{equation}
Then Markov's inequality yields
\begin{equation}\label{aver-weak-conv-1}
  \mu(G_n)\ge 1-\sqrt{\eps_n}.
\end{equation}
If we choose $n_1\in\nn$ such that $\sqrt{\eps_n}<\frac{1}{2}$ for all $n\ge n_1$ then all the code words are distinct and we can apply Lemma \ref{blackwell-lemma} to each $t\in G_n$ (cf. (\ref{aver-weak-conv-0})) leading to
\[(1-\sqrt{\eps_n})\log M_n\le n\chi(p_{\ast},W_t)+1,\]
which is equivalent to
\begin{equation}\label{aver-weak-conv-2}
  \frac{1}{n}\log M_n\le \frac{\chi(p_{\ast},W_t)+\frac{1}{n}}{1-\sqrt{\eps_n}},
\end{equation}
for all $t\in G_n$ and all $n\ge n_1$. Since (\ref{aver-weak-conv-2}) holds for all $t\in G_n$ we obtain
\begin{equation}\label{aver-weak-conv-3}
  \frac{1}{n}\log M_n\le \frac{\inf_{t\in G_n}\chi(p_{\ast},W_t)+\frac{1}{n}}{1-\sqrt{\eps_n}}. 
\end{equation}
Recall that $p_{\ast}$ depends on the block length $n$. Thus we are done if we can show that
\begin{equation}\label{aver-weak-conv-4}
 \limsup_{n\to\infty}\max_{p\in \mathcal{P}(A)}\inf_{t\in G_n}\chi(p,W_t)\le \sup_{p\in\mathcal{P}(A)}\mathrm{ess-}\inf_{t\in T}\chi(p,W_t) 
\end{equation}
holds.\\
For each $n\in\nn$ with $n\ge n_1$ we choose $p_n\in\mathcal{P}(A)$ with
\[\inf_{t\in G_n}\chi(p_n , W_t)= \max_{p\in \mathcal{P}(A)}\inf_{t\in G_n}\chi(p,W_t). \]
By passing to a subsequence if necessary we may assume that
\begin{equation}\label{aver-weak-conv-5}
  \lim_{n\to\infty}\inf_{t\in G_n}\chi(p_n , W_t)= \limsup_{n\to\infty}\max_{p\in \mathcal{P}(A)}\inf_{t\in G_n}\chi(p,W_t).
\end{equation}
By selecting a further subsequence we can even ensure that $\lim_{j\to\infty}p_{n_{j}}=:p'\in \mathcal{P}(A)$ due to the compactness of $\mathcal{P}(A)$. By (\ref{aver-weak-conv-5}) we have
\begin{equation}\label{aver-weak-conv-6}
\lim_{j\to\infty}\inf_{t\in G_{n_{j}}}\chi(p_{n_{j}},W_t)=   \limsup_{n\to\infty}\max_{p\in \mathcal{P}(A)}\inf_{t\in G_n}\chi(p,W_t).
\end{equation}
Now, since
\[ \lim_{j\to\infty}\chi(p_{n_{j}},W_t)=\chi(p',W_t) \]
for all $t\in T$ by the continuity of Holevo information, and since $\lim_{j\to \infty}\mu(G_{n_{j}})=1$ by (\ref{aver-weak-conv-1}), we see that the assumptions of Lemma \ref{prop-ess-inf} are fulfilled for the functions
\[ f_j(t):=\chi(p_{n_{j}},W_t)\textrm{ and } f(t):=\chi(p',W_t).\]
Thus Lemma \ref{prop-ess-inf} and (\ref{aver-weak-conv-6}) show that 
\begin{eqnarray*}
 \limsup_{n\to\infty}\max_{p\in \mathcal{P}(A)}\inf_{t\in G_n}\chi(p,W_t)&\le &\mathrm{ess-}\inf_{t\in T}\chi(p',W_t)\\ 
&\le & \sup_{p\in\mathcal{P}(A)}\mathrm{ess-}\inf_{t\in T}\chi(p,W_t).
\end{eqnarray*}
This is exactly (\ref{aver-weak-conv-4}) and we are done.
\end{proof}
\section{Conclusion}
In this paper we have shown the existence of universally ``good'' classical-quantum codes for two particularly interesting cq-channel models with limited channel knowledge. We determined the optimal transmission rates for the classes of compound and averaged cq-channels. For the first model we could prove the strong converse for the maximum error criterion whereas for the latter only a weak converse is established.\\
The coding theorems for compound and averaged cq-channels imply in an obvious way the corrsponding capacity formulas for the classical product state capacities of compound and averaged quantum channels (cf. the arguments in \cite{holevo, schumacher, winter} for memoryless quantum channels). To be specific the classical product state capacity of a family $\{\mathcal{N}_t:\mathcal{B}(\hr')\to\mathcal{B}(\hr)  \}_{t\in T}$ of quantum channels, as described by completely positive, trace preserving maps, is given, according to our results, by
\[C_{1}(\{\mathcal{N}_t  \}_{t\in T} )=\sup_{\{p_i, D_i\}  }\inf_{t\in T}\chi (\{p_i,\mathcal{N}_t(D_i)\}),  \]
where the supremum is taken over all ensembles $\{p_i, D_i\} $ of possible input states $D_i\in\mathcal{S}(\hr')$ occurring according to probability distribution $(p_i)$, and
\[ \chi (\{p_i,\mathcal{N}_t(D_i)\}):= S\left(\sum p_i \mathcal{N}_t(D_i) \right)-\sum p_i S(\mathcal{N}_t(D_i)). \]
The full classical capacity of $\{\mathcal{N}_t  \}_{t\in T} $ is then
\[C(\{\mathcal{N}_t  \}_{t\in T} )=\lim_{n \to\infty}\frac{1}{n}C_1(\{\mathcal{N}_t^{\otimes n}  \}_{t\in T} ),  \]
and the limit is in general necessary by a counterexample to the additivity conjecture given by Hastings \cite{hastings}.\\
The capacity results for compound and averaged cq-channels show nicely the impact of the degree of channel uncertainty on the capacity. In fact, for the compound cq-channel we merely know that the information transmission happens over an unknown memoryless cq-channel which belongs to an a priori given set of channels. The capacity formula (\ref{quant-compound-capacity}) is the best worst-case rate we can guarantee simultaneously for all involved channels. For averaged cq-channels, on the other hand, the formula  (\ref{averaged-capacity}) takes into account only the almost sure worst-case cq-channel, since we are given an additional information represented by the probability measure on the memoryless branches. Consequently, the capacity of compound-cq-channels is smaller than the capacity of their averaged counterparts in many natural situations. A simple example illustrating this effect is as follows. \\
Let $T:=\{1,\ldots , K\}$ be a finite set and let $W_1,\ldots, W_K:\{0,1\}\to\mathcal{S}(\cc^2)$ be cq-channels that defined as follows. Let $W_1$ be any channel with the capacity $C(W_1)=0$. For $j\in \{2,\ldots, K  \}$ select distinct unitaries $U_2,\ldots, U_K$ acting on $\cc^2$ and define $W_j(b):=U_j|e_{b}\rangle\langle e_b|U_j^{\ast}$ where $b\in \{0,1  \}$, $j\in\{2,\ldots, K  \}$ and $e_0,e_1$ is the canonical basis of $\cc^2$. Note that for each $p\in\mathcal{P}(\{0,1  \})$ and $j\in\{2,\ldots, K  \}$
\[\chi(p,W_j)=H(p)  \]
holds, and consequently $C(W_2)=\ldots =C(W_K)=1$.
 Since any sequence of codes with asymptotically vanishing probability of error for the compound cq-channel $T$ has to be reliable for each of our channels $W_1,\ldots ,W_K$ and especially for $W_1$, we see that the only achievable rate for $T$ is $0$. Consequently $C(T)=0$. Now, if both the transmitter and receiver have additional information that the channels from $T$ are drawn according to a priori probability distribution $\mu(1)=0$ and $\mu(i)=\frac{1}{K-1}$ for $i\in\{2,\ldots, K  \}$  then it follows from Theorem \ref{direct-averaged} that
\begin{eqnarray*}C(W)&\ge& \sup_{p\in\mathcal{P}(\{0,1  \})}\mathrm{ess-}\inf_{t\in T}\chi(p,W_t)\\
                     &=& \sup_{p\in\mathcal{P}(\{0,1  \})}\min_{i\in\{2,\ldots, K\}}\chi(p,W_t)\\
                     &=& \sup_{p\in\mathcal{P}(\{0,1\})}H(p)\\
                     &=& 1,  
\end{eqnarray*}
where $W$ denotes the averaged channel associated with $T$ and $\mu$.
\section*{Acknowledgment}
We are grateful to M. Hayashi who helped us clarify the story of his approach to universal quantum hypothesis testing.\\
We thank the Associate Editor and the anonymous referee for many useful comments and suggestions that improved the readability of the paper.
\appendices
\section{Proof of Theorem \ref{one-shot-coding-theorem}}\label{proof-one-shot-coding-theorem}
This appendix is devoted to the proof of Theorem \ref{one-shot-coding-theorem}. We will apply a random coding argument of Hayashi and Nagaoka which in turn is based on the following operator inequality which we quote from the work \cite{hayashi-nagaoka} by Hayashi and Nagaoka:
\begin{theorem}[Hayashi \& Nagaoka \cite{hayashi-nagaoka}]\label{hayashi-nagaoka}
Let $\mathcal{K}$ be a finite-dimensional Hilbert space. For any operators $a,b\in\mathcal{B}(\mathcal{K})$ with $0\le a\le \idn$ and $b\ge 0$, we have
\begin{equation}\label{key-inequality}
\idn-\sqrt{a+b}^{-1}a \sqrt{a+b}^{-1}\le 2(\idn-a)+4b, 
\end{equation}
where $(\cdot)^{-1}$ denotes the generalized inverse.
\end{theorem}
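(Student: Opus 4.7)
The plan is first to reduce to invertible $a+b$. Replacing $b$ by $b+\varepsilon\idn$ makes $S_\varepsilon:=a+b+\varepsilon\idn$ strictly positive, so the generalized inverse coincides with the ordinary inverse; letting $\varepsilon\downarrow 0$ afterwards uses continuity of the spectral functional calculus. On $\ker(a+b)$, where $a$, $b$ and $\sqrt{a+b}^{-1}$ all vanish, the inequality reads $\idn-P\le 2\idn$ for $P$ the orthogonal projection onto the range of $a+b$, which is trivial. So from now on I may assume $S:=a+b$ invertible. The identity $S^{-1/2}(a+b)S^{-1/2}=\idn$ then rewrites the left-hand side as
\[
\idn-\sqrt{a+b}^{-1}a\sqrt{a+b}^{-1}=\sqrt{a+b}^{-1}\,b\,\sqrt{a+b}^{-1},
\]
so the claim becomes
\[
\sqrt{a+b}^{-1}b\sqrt{a+b}^{-1}\le 2(\idn-a)+4b. \tag{$\dagger$}
\]

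Next, I would conjugate $(\dagger)$ by $S^{1/2}$ on both sides. Using $S^{1/2}(\idn-a)S^{1/2}=S-S^{1/2}aS^{1/2}=(a+b)-S^{1/2}aS^{1/2}$, the inequality $(\dagger)$ is equivalent, after rearrangement, to the polynomial-looking
\[
2S^{1/2}aS^{1/2}\le 2a+b+4 S^{1/2}bS^{1/2}. \tag{$\ast$}
\]
In the commutative case $(\ast)$ reads $2a(a+b)\le 2a+b+4b(a+b)$, i.e.\ $2a(a+b-1)\le b(1+4(a+b))$; since $a\le 1$ the left-hand side is bounded by $2ab$ while the right-hand side is at least $4ab\ge 2ab$, so the scalar inequality holds with room to spare. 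This commutative computation also exhibits the origin of the specific constants $2$ and $4$.

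The real work is the non-commutative upgrade of $(\ast)$. The strategy I would pursue is to write the difference of the two sides of $(\ast)$ as a sum of expressions of the form $X_i^{\ast}X_i$. A natural starting point is
\[
(\sqrt{a+b}-\sqrt{a})^{2}\ge 0\ \Longrightarrow\ \sqrt{a+b}\sqrt{a}+\sqrt{a}\sqrt{a+b}\le(a+b)+a=2a+b,
\]
which controls the symmetrized product $\sqrt{a+b}\sqrt{a}+\sqrt{a}\sqrt{a+b}$ by precisely the $2a+b$ appearing on the right of $(\ast)$. The obstacle is that $(\ast)$ features the sandwich $S^{1/2}aS^{1/2}$, not this symmetrization, and these differ by commutator-type terms because $a$ and $\sqrt{a+b}$ do not commute. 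I would absorb those cross terms by adding a companion positive-semidefinite square involving $\sqrt{a+b}$ and $\sqrt{b}$, for example $(\sqrt{a+b}-2\sqrt{b})^{2}\ge 0$, or more generally $(\alpha\sqrt{a+b}-\beta\sqrt{a}-\gamma\sqrt{b})^{2}\ge 0$ with weights $\alpha,\beta,\gamma$ chosen so that the unwanted commutators cancel and the coefficient $4$ of $S^{1/2}bS^{1/2}$ emerges. Calibrating these weights so that all error terms are dominated is the delicate combinatorial heart of the argument. Once $(\ast)$ is established as an operator inequality, undoing the conjugation of Step~2 and the perturbation of Step~1 yields the stated inequality $\idn-\sqrt{a+b}^{-1}a\sqrt{a+b}^{-1}\le 2(\idn-a)+4b$.
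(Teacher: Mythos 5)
The paper itself does not prove this statement; it is quoted from Hayashi and Nagaoka \cite{hayashi-nagaoka} (see the sentence preceding it in Appendix A), so there is no in-paper argument to compare against. Assessed on its own terms, your proposal sets up a correct reduction and then leaves the decisive step unproved.

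Your preliminaries are sound: the perturbation $b\mapsto b+\varepsilon\idn$, the separate treatment of $\ker(a+b)$ (legitimate because $a\le a+b$ and $b\le a+b$ force $\mathrm{ran}\,a,\ \mathrm{ran}\,b\subseteq\mathrm{ran}(a+b)$, so both sides are block diagonal for $\mathrm{ran}(a+b)\oplus\ker(a+b)$), the identity $\idn-\sqrt{a+b}^{-1}a\sqrt{a+b}^{-1}=\sqrt{a+b}^{-1}b\sqrt{a+b}^{-1}$, and the conjugation by $S^{1/2}$ producing your $(\ast)$ are all correct. But $(\ast)$ itself is never established: you call it ``the delicate combinatorial heart of the argument'' and propose realising the difference of its two sides as a sum of squares $(\alpha\sqrt{a+b}-\beta\sqrt{a}-\gamma\sqrt{b})^{2}\ge 0$. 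This ansatz does not match the target: expanding it yields symmetrised cross terms such as $\sqrt{a}\sqrt{b}+\sqrt{b}\sqrt{a}$ and $\sqrt{S}\sqrt{a}+\sqrt{a}\sqrt{S}$, whereas $(\ast)$ contains the sandwiches $S^{1/2}aS^{1/2}$ and $S^{1/2}bS^{1/2}$, and no scalar weights convert one family into the other. More fundamentally, such a pure square never uses $a\le\idn$, which is indispensable (the inequality fails if $a$ is merely positive).

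The missing idea is operator monotonicity of $t\mapsto\sqrt{t}$. Concretely, with $P:=(a+b)^{-1/2}$ and writing $Pb^{1/2}=b^{1/2}+(P-\idn)b^{1/2}$, the elementary bound $XY^{\ast}+YX^{\ast}\le XX^{\ast}+YY^{\ast}$ gives
\[
PbP=\bigl(Pb^{1/2}\bigr)\bigl(Pb^{1/2}\bigr)^{\ast}\le 2b+2(P-\idn)b(P-\idn).
\]
Next, $(P-\idn)b(P-\idn)\le(P-\idn)(a+b)(P-\idn)=(\idn-(a+b)^{1/2})^{2}$ because $(P-\idn)a(P-\idn)\ge 0$. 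Finally, and this is exactly where $a\le\idn$ enters, from $a^{2}\le a\le a+b$ and operator monotonicity of the square root one gets $a\le(a+b)^{1/2}$, hence $(\idn-(a+b)^{1/2})^{2}=\idn-2(a+b)^{1/2}+a+b\le\idn-a+b$. Combining yields $PbP\le 2b+2(\idn-a+b)=2(\idn-a)+4b$, which is the rewritten left-hand side. This monotonicity step is qualitatively different from any sum-of-squares manipulation of $\sqrt{a},\sqrt{b},\sqrt{a+b}$, and is precisely the part your sketch does not supply.
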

Let us first note that our projection $P\in \mathcal{B}_{\textrm{diag}}\otimes \mathcal{B}(\mathcal{K})$ can be uniquely  written as 
\[P=\sum_{k\in K}|k\rangle\langle k|\otimes P_k,  \]
with suitable projections $P_k\in\mathcal{B}(\mathcal{K}) $ for all $k\in K$. With this representation we have
\begin{equation}\label{one-shot-coding-1}
  \textrm{tr}(\rho P)=\sum_{k\in K}w(k)\textrm{tr}(D_k P_k),
\end{equation}
and
\begin{equation}\label{one-shot-coding-2}
  \textrm{tr}((w\otimes \sigma)P )=\sum_{k\in K}w(k)\textrm{tr}(\sigma P_k).
\end{equation}
Now let us set $M:=[2^{\mu-\gamma} ]$ and consider i.i.d. random variables $U_1,\ldots ,U_M$ with values in $K$ each of which is distributed according to $w\in \mathcal{P}(A)$. Moreover we set
\begin{equation}\label{one-shot-coding-3}
  b_{i}(U_1,\ldots,U_M):=\left(\sum_{j=1}^{M}P_{U_j} \right)^{-1/2}P_{U_i}\left(\sum_{j=1}^{M}P_{U_j} \right)^{-1/2}.
\end{equation}
Applying Lemma \ref{hayashi-nagaoka} we obtain
\begin{equation}\label{one-shot-coding-4}
  \idn_{\mathcal{K}}-b_{i}(U_1,\ldots, U_M)\le 2(\idn_{\mathcal{K}}-P_{U_i} )+4 \sum_{\begin{subarray}{l}j=1\\ j\neq i \end{subarray}}^{M} P_{U_j}.
\end{equation}
In the following consideration we use the shorthand $e(U)$ for the average error probability of the random code $(U_i,b_{i}(U_1,\ldots, U_M))_{i=1}^{M}$, i.e. we set 
\[ e(U):=\frac{1}{M}\sum_{i=1}^{M}\textrm{tr}(D_{U_i}(\idn_{\mathcal{K}}-b_{i}(U_1,\ldots U_M)  )). \]
Recalling the fact that $U_1,\ldots ,U_M$ are i.i.d. each distributed according to $w$ and (\ref{one-shot-coding-4}) yields
\begin{eqnarray}\label{one-shot-coding-5}
  \mathbb{E}_{U_1,\ldots ,U_M}(e(U))&\le & \frac{2}{M}\sum_{i=1}^{M}\sum_{k\in K}w(k)\textrm{tr}(D_k( \idn_{\mathcal{K}}-P_k ))\nonumber\\
&&+\frac{4(M-1)M}{M}\sum_{k\in K}w(k)\textrm{tr}(\sigma P_k)\nonumber\\
&\le& 2\textrm{tr}(\rho (\idn-P))+4\cdot M \cdot \textrm{tr}((w\otimes \sigma)P)\nonumber\\
&\le& 2\cdot \lambda +4\cdot 2^{-\gamma},
\end{eqnarray}
where we have used (\ref{one-shot-coding-1}) and (\ref{one-shot-coding-2}) in the second inequality. (\ref{one-shot-coding-5}) shows that there must be at least one deterministic code $(k_i,b_i)_{i=1}^{M}$, which is a realization of the random code $(U_i,b_{i}(U_1,\ldots, U_M))_{i=1}^{M}$, with average error probability less than $ 2\cdot \lambda +4\cdot 2^{-\gamma}$ which concludes the proof of Theorem \ref{one-shot-coding-theorem}.

\end{document}